\newtheorem{theorem}{Theorem}
\newtheorem{lemma}{Lemma}
\newcommand{\etal}{\textit{et al.}}
\newcommand{\sumtwo}{\operatorname*{\sum~\sum}}
\def\BibTeX{{\rm B\kern-.05em{\sc i\kern-.025em b}\kern-.08em
    T\kern-.1667em\lower.7ex\hbox{E}\kern-.125emX}}
\begin{document}
\history{Date of publication xxxx 00, 0000, date of current version xxxx 00, 0000.}
\doi{10.1109/TQE.2020.DOI}
\title{Expressivity of Variational Quantum Machine Learning on the Boolean Cube}
\author{\uppercase{Dylan Herman}\authorrefmark{1},
\uppercase{Rudy Raymond}\authorrefmark{2,}\authorrefmark{3},
\uppercase{Muyuan Li}\authorrefmark{4},
\uppercase{Nicolas Robles}\authorrefmark{4},
\\\uppercase{Antonio Mezzacapo}\authorrefmark{4},
\uppercase{Marco Pistoia}\authorrefmark{1}}
\address[1]{Global Technology Applied Research, JPMorgan Chase, New York, New York 10017, USA}
\address[2]{IBM Quantum, IBM Research - Tokyo, 19-21 Nihonbashi Hakozaki-cho, Chuo-ku, Tokyo, 103-8510, Japan}
\address[3]{Quantum Computing Center, Keio University, 3-14-1 Hiyoshi, Kohoku-ku, Yokohama, Kanagawa, 223-8522, Japan}
\address[4]{IBM Quantum, IBM T.~J.~Watson Research Center, Yorktown Heights, New York 10598, USA}

\corresp{Corresponding author: Dylan Herman (email: dylan.a.herman@jpmorgan.com).}

\begin{abstract}
Categorical data plays an important part in machine learning research and appears in a variety of applications. Models that can express large classes of real-valued functions on the Boolean cube are useful for problems involving discrete-valued data types, including those which are not Boolean. To this date, the commonly used schemes for embedding classical data into variational quantum machine learning models encode continuous values. 
Here we investigate quantum embeddings for encoding Boolean-valued data into parameterized quantum circuits used for machine learning tasks. 
We narrow down representability conditions for functions on the $n$-dimensional Boolean cube with respect to previously known results, using two quantum embeddings: a phase embedding and an embedding based on quantum random access codes. We show that for any real-valued function on the $n$-dimensional Boolean cube, there exists a variational linear quantum model based on a phase embedding using $n$ qubits that can represent it and an ensemble of such models using $d < n$ qubits that can express any function with degree at most $d$. Additionally, we prove that variational linear quantum models that use the quantum random access code embedding can express functions on the Boolean cube with degree $ d\leq \lceil\frac{n}{3}\rceil$ using $\lceil\frac{n}{3}\rceil$ qubits, and that an ensemble of such models can represent any function on the Boolean cube with degree $ d\leq \lceil\frac{n}{3}\rceil$. Furthermore, we discuss the potential benefits of each embedding and the impact of serial repetitions. Lastly, we demonstrate the use of the embeddings presented by performing numerical simulations and experiments on IBM quantum processors using the Qiskit machine learning framework.
\end{abstract}

\begin{keywords}
quantum machine learning, variational quantum algorithms, expressivity, Fourier analysis, Boolean cube
\end{keywords}

\titlepgskip=-15pt

\maketitle

\section{Introduction}
\label{sec:intro}
Machine learning problems involving categorical data are prevalent across many domains. The range of a categorical variable lies in a finite set, and each element in this set can be associated with an integer. Thus one can map a single categorical variable to multiple binary variables. If our goal is to perform supervised learning, then this converts the problem into learning a real-valued function on the $n$\textit{-dimensional Boolean (hyper)cube} $\mathbb{B}^n = \{0, 1\}^n$. This implies that models that can express large classes of functions on the Boolean cube are useful for problems involving categorical data.In this article, we consider using \emph{variational quantum machine learning} (VQML) \cite{schuld2018supervisedBook} to fit real-valued functions of multiple binary variables. Thus, such models can be applied to regression or classification tasks. Beyond machine learning, variational quantum algorithms \cite{2021vqa} have been applied to chemistry  \cite{peruzzo2014variational, kandala2017hardware, rattew2019domain}, combinatorial optimization \cite{farhi2014quantum, Hadfield_2019}, quantum linear systems \cite{vqls, Huang_2021}, and the simulation of quantum dynamics \cite{McArdle_2019, Yuan_2019}. When applied to supervised learning, VQML consists of using parameterized quantum circuits (PQCs) built from two types of circuit blocks: \emph{embedding blocks}, which encode the inputs into a quantum system, and \textit{trainable blocks}, where learnable parameters are adjusted in order to optimize the output results. There are various methods for optimizing the learnable parameters in a hybrid classical-quantum iterative manner, including analytical gradients \cite{schuld2019evaluating,stokes2020quantum,koczor2020quantum, Ostaszewski_2021, watanabe2021optimizing}. This paradigm has been used to construct various analogues to classical machine learning models applicable to supervised learning tasks \cite{farhi2018classification, Mitarai_2018, Havl_ek_2019,2019schuldKilloran, benedetti2019parameterized, Cong_2019, 2021cnn}.

As an example of the potential applicability of VQML, it has been observed that these models can be used to solve a variety of financial problems, such as fraud detection and creditworthiness determination \cite{pistoia2021quantum, herman2022survey, Egger_2020}. There has been an active line of studies to characterize the expressivity \cite{larocca2021theory, 2021tobias, Du_2022}, the generalizability \cite{2021powerofqnn, caro2022pseudo, Bu_2022, Chen2021,  popescu2021, Caro2021encodingdependent, Banchi_2021,   caro2022generalization, cai2022sample}, and the trainability \cite{mcclean2018barren, zhang2020toward, 2021Patti} of VQML models. However, the specific case of quantum models with discrete-valued inputs has not been investigated as extensively \cite{thumwanit2021trainable, yano2020efficient}.

A recent study by Schuld \etal~\cite{Schuld_2021} showed that the output, when it is represented by the expected value of an observable, of a VQML model can be expressed as a partial sum of a multidimensional Fourier series. The connection between VQML models and Fourier series was also observed in \cite{gil2020input}. Recently, Caro \etal~\cite{Caro2021encodingdependent} derived generalization bounds for such models. The range of attainable frequencies is related to the quantum embedding used, and this range can be broadened by repeating the embedding sequentially, a process called \emph{data re-uploading} \cite{2020reupload}, or by introducing additional sets of qubits and repeating the embedding in parallel for each set. The observable and trainable blocks control the coefficients of the Fourier basis elements in the partial sum. Since every function in $L_2([0, 2\pi]^n)$ can be represented by the limit of a Fourier series \cite{weisz2012summability}, VQML models can approximate any function in this space to arbitrarily small error, in $L_2$ norm, by using an embedding scheme that produces the required Fourier spectrum. This also assumes that the observable and trainable blocks can fit the Fourier coefficients to the desired error, which Schuld \etal~assumed when deriving their results. Along similar lines, Goto \etal~\cite{2021uapfeaturemaps} demonstrated that models built from a linear combination of basis functions derived from quantum-enhanced feature spaces are universal for continuous functions. Similar to Schuld \etal, the embeddings that Goto \etal~used consisted of serial and parallel repetitions of simple encoding schemes. We show that \emph{variational linear quantum models}, which do not make use of serial or parallel repetitions of a quantum embedding, are sufficient for representing functions on the Boolean cube. Variational linear quantum models use PQCs that consist of one embedding block and one trainable block. For two quantum embeddings, we use Fourier analysis to derive the classes of real-valued functions on the Boolean cube that can be represented by variational linear quantum models. The number of qubits used only depends on the dimension of the input. 

In this paper, we explore two research directions.
\begin{enumerate}
    \item First, we consider a \emph{phase embedding}, which encodes each input bit into the relative phase of a single-qubit state, i.e. the number of qubits used equals the number of input bits. We show that any real-valued function on the Boolean cube, $\mathbb{B}^{n}$, can be represented by a variational linear quantum model that uses the phase embedding and that a classical ensemble, formed by summing the outputs of multiple models each using $d$ qubits, can express any function with degree $\leq d$. The degree of a function on $\mathbb{B}^n$ is the maximal Hamming weight over all $\bm{s} \in \mathbb{B}^n$ where the Fourier transform is nonzero.
    
    \item Then, we further consider a \emph{QRAC embedding}---a quantum embedding that makes use of \emph{quantum random access codes} (QRACs) \cite{ambainisqrac, Doriguello_2021}. This embedding was introduced by Yano \etal~\cite{yano2020efficient} for encoding categorical data into variational quantum classifiers. We investigate the classes of functions expressible by variational linear quantum models using this embedding. Specifically, we show that any function of degree $d \leq \lceil \frac{n}{3} \rceil$ can be represented by a classical ensemble formed by summing the outputs of multiple QRAC-embedding-based variational linear quantum models each using $\lceil \frac{n}{3} \rceil$ qubits. 
\end{enumerate}
We note that the above results imply that for functions with degree $\leq \lceil \frac{n}{3} \rceil$ an ensemble of phase embedding models requires only  $\lceil \frac{n}{3} \rceil$ qubits, which is the same number of qubits used with the QRAC embedding. However, it can still be beneficial to use the QRAC embedding in certain cases as discussed later. 

Juntas form an important class of functions on Boolean domains. A $k$\textit{-junta} is a function, Boolean or real-valued, that depends on at most $k$ out of the $n$ input bits. These functions are useful in computational learning theory \cite{kearns1994introduction} for modeling learning tasks where the data can be explained using a subset of the available features \cite{blum1994relevant}. Such scenarios typically occur when applying supervised learning to real-world data sets \cite{mossel2003learning}. There has been a lot of progress in developing quantum computational learning theory \cite{arunachalam2017guest}. For example, there exist algorithms in the query model \cite{ambqunatumquery} for both learning and testing $k$-juntas \cite{At_c__2007, belovs, groupcomb, Arunachalam2021twonewresultsabout}, some of which make use of both quantum and classical queries. By definition, if $k \leq \lceil \frac{n}{3}\rceil$, then the degree of the junta is guaranteed to be at most $\lceil \frac{n}{3}\rceil$ so the junta can be represented by  an ensemble of  linear quantum models that use the QRAC or phase embedding.

With regards to classical neural networks, there has been recent work investigating the learnability of parity functions \cite{daniely2020learning} and real-valued functions on the Boolean cube \cite{Yang2019}. There are also neural networks for Lattice Regression~\cite{Gupta2016} and the recent Hierarchical Lattice Layer~\cite{Yanagisawa2022} for partially monotone regression. %

We perform  experiments on simulators and on IBM Quantum hardware to study the expressiveness of variational linear quantum models for low-degree functions. These experiments demonstrate the efficacy of the phase and QRAC embeddings for representing functions on the Boolean cube.

\subsection{Main Results}
Summarizing, we list here the main contributions of this paper.

\begin{enumerate}

    \item We show that for any function on the Boolean cube $\mathbb{B}^n$, there exists a variational linear quantum model with $n$ qubits based on a phase embedding such that the output of the quantum model agrees with the target function for all inputs. Additionally, we show that for any function with degree $\leq d$ there exists an ensemble of variational linear quantum models using the phase embedding and $d$ qubits such that the output of the ensemble agrees with the output of the target function for all inputs.
    \item We then present sufficient conditions for variational linear quantum models using a QRAC embedding to be able to express functions of degree $d \leq \lceil \frac{n}{3} \rceil$ using $\lceil\frac{n}{3} \rceil$ qubits. Moreover, we then demonstrate that for any function of degree $d \leq \lceil \frac{n}{3} \rceil$ on the Boolean cube $\mathbb{B}^n$, there exists an ensemble of QRAC-based variational linear quantum models with $\lceil\frac{n}{3} \rceil$ qubits each such that the output of the ensemble agrees with the output of the target function for all inputs.
    \item We test these two embeddings on low-degree functions on the Boolean cube via numerical experiments and on IBM superconducting quantum processors.

\end{enumerate}
We note that results derived for the phase and QRAC embeddings were proven under the assumption of universal trainable gates and arbitrary observables that are diagonal
in the computational basis.

\subsection{Paper Organization}
Section \ref{sec:prelims} reviews the Fourier analysis of functions with Boolean inputs and the use of PQCs for machine learning. Section \ref{sec:embeddings} introduces embeddings for representing functions on the Boolean cube with PQCs. Then, we use tools from Fourier analysis to study the expressivity of variational quantum machine learning models that make use of these embeddings. In Section \ref{sec:experiments} we apply variational quantum models, using either the phase or QRAC embeddings, to supervised learning problems involving low-degree functions on the Boolean cube. These experiments were run in simulation and on IBM Quantum hardware. Lastly, the appendices contain further computational elaborations of the topics discussed in the main text.

\section{Preliminaries}
This section introduces the concepts necessary to understand the novel contributions of this paper.  Particularly, it focuses on the Fourier analysis on the Boolean cube and gives an overview of the state of the art of variational quantum machine learning.
\label{sec:prelims}
\subsection{Fourier Analysis on the Boolean Cube}
\label{sec:fourier_onB}
First, we briefly review the Fourier analysis of real-valued functions with Boolean inputs. This short review is based on the introduction by de Wolf~\cite{Wolf2008ABI}. We consider the $2^n$-dimensional real vector space $\mathcal{G} := \{f: \mathbb{B}^n \rightarrow \mathbb{R}\}$. This space can be equipped with the following inner product:
\begin{equation}
\label{eqn:innerproduct}
    \left<f, g\right> := \frac{1}{2^n} \sum_{\bm{b}\in \mathbb{B}^n} f(\bm{b})g(\bm{b}), ~~\forall f,g \in \mathcal{G}.
\end{equation}
To every tuple $\bm{s} \in \mathbb{B}^n$, we associate a function $\chi_{\bm{s}} : \mathbb{B}^n \xrightarrow[]{} \{\pm 1\}$ that is defined as follows:
\begin{equation}
\label{eqn:chi}
    \chi_{\bm{s}}(\bm{b}) := \prod_{i:s_i=1}(-1)^{b_{i}} = (-1)^{\bm{s}\cdot \bm{b}},
\end{equation}
where $\bm{s}\cdot \bm{b}$ is given by the scalar product:
\begin{align}
    \bm{s}\cdot\bm{b} := \sum_{i=1}^{n}s_ib_i.
\end{align}
The function $\chi_{\bm{s}}$ depends on the parity of a subset, indicated by $\bm{s}$, of the input bits.

With respect to the inner product defined in \eqref{eqn:innerproduct}, the set containing all $\chi_{\bm{s}}$ forms an orthonormal basis for  $\mathcal{G}$, called the \emph{Fourier basis}. The \emph{Fourier transform} of a given $f \in \mathcal{G}$, denoted by $\widehat{f}$, is defined as follows:
\begin{equation}
\label{eqn:fouriercoeff}
\widehat{f}(\bm{s}) := \left<f, \chi_{\bm{s}}\right> = \frac{1}{2^n} \sum_{\bm{b}\in \mathbb{B}^n} f(\bm{b})\chi_{\bm{s}}(\bm{b}),~~\forall \bm{s} \in \mathbb{B}^n.
\end{equation}
Because the set of all $\chi_{\bm{s}}$ forms an orthonormal basis, it holds that any $f$ can be expressed as
\begin{equation}
\label{eqn:fourierseries}
f(\bm{b}) = \sum_{\bm{s} \in \mathbb{B}^n} \widehat{f}(\bm{s}) \chi_{\bm{s}}(\bm{b}),
\end{equation}
where the value $\widehat{f}(\bm{s})$ is the \emph{Fourier coefficient} associated with $\chi_{\bm{s}}$ and the set of all $\widehat{f}(\bm{s})$ is called the \emph{Fourier spectrum} of $f$. The \emph{degree} of $f$ is the maximal Hamming weight over all $\bm{s} \in \mathbb{B}^n$ such that $\widehat{f}(\bm{s}) \neq 0$. 

In Section \ref{sec:intro}, we introduced a $k$\textit{-junta} as a function on $\mathbb{B}^n$ whose output only depends on $k$ of the $n$ input variables $b_1, b_2, \dots, b_n$. For a given $k$-junta, suppose $\mathcal{C} \subseteq [n] := \{1, \dots, n \}$ contains the $k$ indices corresponding to the input variables that the junta depends on. It can be easily shown that the Fourier transform of a $k$-junta can only be nonzero on elements from the set 
\begin{align}
\{ \bm{s} \in \mathbb{B}^n ~|~ \forall i \in [n]: s_i =1 \implies i \in \mathcal{C}\}.
\end{align}
Thus, the degree of the junta is bounded by $\lvert\mathcal{C}\rvert = k$, so when $k \ll n$, a $k$-junta is guaranteed to also be a low-degree function. In the following sections, we will use these definitions to analyze the classes of functions on $\mathbb{B}^n$ that can be expressed by VQML models.

\subsection{Variational Quantum Machine Learning}
\label{sec:quantumcircuitlearning}
This section reviews relevant concepts of VQML~\cite{schuld2018supervisedBook}. Before moving to functions defined on the Boolean cube, we consider the  task of fitting a real-valued function that is defined on an arbitrary set $\mathcal{A} \subseteq \mathbb{R}^n$. For a continuous-valued range, this task is called \textit{regression}, and for a discrete-valued range, it is called \textit{classification}. The input data $\bm{x} \in \mathcal{A}$, stored on a classical memory, can be  embedded into a quantum state by utilizing an $m$-qubit parameterized unitary operator $\bm{U}(\bm{x})$, which is a unitary-operator-valued function  of the $n$-dimensional vectors in $\mathcal{A}$. The operator $\bm{U}$ is called a \emph{quantum embedding}.

Let us consider a parameter-independent Hermitian observable $\bm{D}$, defined to be diagonal with respect to the computational basis, and thus:
\begin{equation}
    \bm{D} \in \textup{span}_\mathbb{R}\{ \mathbb{I}, \mathsf{Z}\}^{\otimes m}
\end{equation}
We further define a parameterized observable $\bm{O}_{\bm{\theta}}$ with variational parameters $\bm{\theta}$ as follows:
\begin{equation}
\label{eqn:param_obs}
    \bm{O}_{\bm{\theta}} := \bm{W}^{\dagger}(\bm{\theta})\bm{D}\bm{W}(\bm{\theta}),
\end{equation}
where $\bm{W}(\bm{\theta})$ is a unitary operator implemented by a PQC.

The VQML model that we focus on in this work is the \emph{variational linear quantum model}:
\begin{equation}
    \label{eqn:variational_linear_quantum_model}
    f_{\bm{\theta}}(\bm{x}) := \Tr[\bm{O}_{\bm{\theta}}\rho(\bm{x})],
\end{equation}
where $\rho(\bm{x}) := \bm{U}(\bm{x})\ketbra{0_m}\bm{U}^{\dagger}(\bm{x})$ is the state of the system after the action of the unitary $\bm{U}(\bm{x})$ and $\Tr$ is the trace operator.  Essentially, $f_{\bm{\theta}}(\bm{x})$
maps $\bm{x}$ to a real number by taking the expectation of $\bm{O}_{\bm{\theta}}$ with respect to $\rho(\bm{x})$.
 This model is also called a \textit{quantum neural network} \cite{2021powerofqnn}, and in the context of classification, it has been called the \textit{explicit linear quantum classifier} \cite{gyurik2021structural} or \textit{variational quantum classifier} \cite{Havl_ek_2019}. Since the expectation of an observable is continuous valued, for classification, some post-processing of the output is required to map it to the finite set of possible classes.

We can implement the parameterized measurement by evolving $\rho(\bm{x})$ by the parameterized unitary operator $\bm{W}(\bm{\theta})$ and then measuring $\bm{D}$. We relate this sequence of operations to Equation \eqref{eqn:variational_linear_quantum_model} as follows:
\begin{align}
    &\Tr[\bm{D}\bm{W}(\bm{\theta})\rho(\bm{x})\bm{W}^{\dagger}(\bm{\theta})] \nonumber\\&= \Tr[\bm{W}^{\dagger}(\bm{\theta})\bm{D}\bm{W}(\bm{\theta})\rho(\bm{x})] = \Tr[\bm{O}_{\bm{\theta}}\rho(\bm{x})],
\end{align}
where we used the cyclic property of the trace.

The model in Equation \eqref{eqn:variational_linear_quantum_model} is linear in the sense of being a quantum analogue to the linear models \cite{friedman2001elements} of classical machine learning \cite{Havl_ek_2019, schuld2021supervised}. Explicitly, a linear model in classical machine learning is of the form:
\begin{equation}
\label{eqn:classical_linear_model}
    h(\bm{x}) = \left<\bm{w}, \phi(\bm{x})\right>_{\mathcal{F}},
\end{equation}
where $\phi: \mathcal{A} \xrightarrow[]{} \mathcal{F}$ is called a \emph{feature map}, and $\mathcal{F}$ is the associated \emph{feature space}. In addition, $\bm{w} \in \mathcal{F}$ is fixed for all inputs $\bm{x} \in \mathcal{A}$, but it is chosen to minimize some cost function by using an optimization procedure. For the model defined in Equation \eqref{eqn:variational_linear_quantum_model}, $\phi(\bm{x}) = \rho(\bm{x})$ maps $\bm{x}$ into a \emph{quantum feature space}, which contains $2^m \times 2^m$ density matrices representing quantum states called \emph{feature states}. In addition, the observable $\bm{O}_{\bm{\theta}}$ represents a $\bm{\theta}$-parameterized family of $\bm{w}$'s. In this case, each $\bm{w}$ in the family is a Hermitian matrix. The inner product in Equation \eqref{eqn:classical_linear_model} is now the Hilbert-Schmidt inner product. Lastly, a VQML model that interleaves embedding and trainable layers, which by \eqref{eqn:variational_linear_quantum_model} implies it is not a linear model, is known in literature as a \emph{data re-uploading model} \cite{2020reupload}. %

The expressivity of both classical and quantum linear models solely depends on the feature map used to encode $\bm{x}$, as both $\bm{w}$ and $\bm{O}_{\bm{\theta}}$ only define linear functions in the feature space. The feature maps can be used to make $h$ or $f_{\bm{\theta}}$ nonlinear as functions on the domain $\mathcal{A}$. In supervised learning,
the goal is to minimize the \emph{regularized empirical risk}:
\begin{equation}
\label{eqn:regularized_risk}
      \frac{1}{\lvert\mathcal{T}\rvert}\sum_{(\bm{x}, y) \in \mathcal{T}} \ell(\bm{x}, y, h) + J(h)
\end{equation}
over the labeled training set
\begin{equation}
\label{eqn:training_dataset}
    \mathcal{T} := \{ (\bm{x}_1, y_1), \dots, (\bm{x}_t, y_t)\},
\end{equation}
where each $\bm{x}_i \in \mathcal{A}$ and each $y_i \in \mathbb{R}$. 
The functions $\ell$ and $J$ are called the \emph{loss} and \emph{regularizer}, respectively.

Classically, when $\phi(\bm{x})$ is difficult to compute explicitly or to operate on, we instead utilize \emph{kernel methods}. The \emph{kernel function} induced by the feature map $\phi$ is defined as:\begin{equation}
k(\bm{x}_i, \bm{x}_j) := \left<\phi(\bm{x}_i), \phi(\bm{x}_j)\right>_{\mathcal{F}}.
\label{eqn:featuremap}
\end{equation}
Kernel methods consider $h$, in Equation \eqref{eqn:classical_linear_model}, as a function in the \emph{reproducing-kernel Hilbert space} (RKHS) generated by $k$.
This \textit{kernel trick} is effective when the functional form of $k$ is easier to evaluate than it is to explicitly compute the inner product between  $\phi(\bm{x}_i)$ and $\phi(\bm{x}_j)$ as done in Equation \eqref{eqn:featuremap}. A common classical example is the Gaussian kernel, which efficiently computes inner products in an infinite-dimensional feature space \cite{friedman2001elements}. Suppose $J(h) := z(\lVert h\rVert_{k})$, where  $z : [0, \infty) \xrightarrow[]{} \mathbb{R}$ is strictly increasing, and $\lVert h\rVert_{k}$ is the norm of $h$ in the RKHS. Then, according to the representer theorem \cite{hofmann2008kernel}, any minimizer $h_{\min}$ of the regularized empirical risk \eqref{eqn:regularized_risk} lies in the RKHS and is of the form:
\begin{align}
\label{eqn:representer}
    &h_{\min}(\bm{x}') = \sum_{(\bm{x}, y) \in \mathcal{T}}\alpha_{\bm{x}} k(\bm{x}, \bm{x}') \nonumber\\&= \sum_{(\bm{x}, y) \in \mathcal{T}}\alpha_{\bm{x}}\left<\phi(\bm{x}), \phi(\bm{x}')\right>_{\mathcal{F}}.
\end{align}
If $\ell$ is convex and $J(h) = \beta \lVert h\rVert_k^2$, where $\beta \ge 0$, then the $\alpha_{\bm{x}}$'s can be found by solving a convex optimization problem called \emph{kernel ridge regression}. This requires computing the \emph{kernel matrix} $\bm{K}$, with entries $\bm{K}_{i,j} = k(\bm{x}_i, \bm{x}_j)$. In addition, $\bm{K}$ is a $\lvert\mathcal{T}\rvert \times \lvert\mathcal{T}\rvert$ real-valued symmetric matrix. 

The \emph{quantum-kernel method}, as originally stated \cite{Havl_ek_2019}, uses the \emph{fidelity kernel}: \begin{align}
\label{eqn:quantum_kernel}
    k_{Q}(\bm{x}_i, \bm{x}_j) := \Tr[\rho(\bm{x}_i) \rho(\bm{x}_j)] = \lvert\braket{\psi_i}{\psi_j}\rvert^2,
\end{align} 
where $\ket{\psi_i} = \bm{U}(\bm{x}_i)\ket{0_m}$, and $\bm{U}$ is a quantum embedding. This kernel computes the Hilbert-Schmidt inner product between quantum feature states. Liu \etal~\cite{2021liurigorous} demonstrated a quantum speedup using such kernel methods for solving a discrete-log-inspired supervised learning problem. It has been observed that generalization can be difficult with the fidelity kernel, however, there exist heuristics \cite{2021Huang, kubler2021} and hyperparameter optimization techniques \cite{Canatar2022} to  enable generalization.  The kernel defined in Equation \eqref{eqn:quantum_kernel} is evaluated on a quantum device for all pairs of training data elements, which avoids performing classical operations on the $2^m$-dimensional statevectors $\ket{\psi_i}$. The entries of the corresponding kernel matrix are given to a classical computer to find the $\alpha_{\bm{x}}$'s, which involves solving a convex optimization problem. Alternatively, the training procedure for VQML models does not consist of computing the kernel. Instead of finding the $\alpha_{\bm{x}}$'s, we optimize the variational parameters $\bm{\theta}$, which can be a non-convex problem. This problem can still be non-convex regardless of whether the loss and regularization functions are convex \cite{2021nonconvexloss, Rivera_Dean_2021}. Furthermore, the quantum-kernel method has access to all minimizers, $h_{\text{min}}$, of the regularized empirical risk \eqref{eqn:regularized_risk}, which lie in the RKHS generated by $k_{Q}$. In contrast, the choices one makes for the PQC $\bm{W}(\bm{\theta})$ and the observable $\bm{D}$ restrict the set of functions that Equation \eqref{eqn:variational_linear_quantum_model} can represent to a subset of the RKHS, which may not contain $h_{\min}$. Even if $\bm{W}(\boldsymbol{\theta})$ can enact arbitrary global unitaries on $m$ qubits, which requires the number of primitive gates to be exponential in $m$, the fact that $\bm{D}$ is fixed prior to training still restricts the set of functions that can be learned.  These observations have led the community to consider whether there is any benefit in using variational linear quantum models instead of quantum-kernel methods \cite{schuld2021supervised}.

One potential benefit of variational models is that the number of circuit runs used to train the model with  parameter-shift methods \cite{schuld2019evaluating} scales as $\mathcal{O}(\dim(\bm{\theta})\times|\mathcal{T}|)$. For quantum-kernel methods, the complexity of computing the kernel matrix requires $\mathcal{O}(|\mathcal{T}|^2)$ evaluations of $k_Q$. Thus, if the variational optimization of $\bm{\theta}$ converges to an acceptable empirical risk value quickly enough, and  $\dim(\bm{\theta}) \ll |\mathcal{T}|$, then the variational model can have an advantage over the quantum-kernel method. However, the overparameterization of quantum models, i.e. the case where $\dim(\bm{\theta}) \gg |\mathcal{T}|$, has also been investigated \cite{larocca2021theory, liu2021representation, shirai2021quantum, You_2022}.  Additionally, there are forms of regularization applicable to variational linear quantum models for which there is, currently, no analogue for quantum-kernel methods \cite{gyurik2021structural}.

Jerbi \etal~\cite{jerbi2021quantum} proved that VQML models, including data re-uploading ones, can be approximately reduced to variational linear quantum models that use additional ancillas and a quantum embedding whose kernel is the identity. This kernel is classically computable, and furthermore, a quantum-kernel method using the identity matrix as the kernel would simply overfit the data. The authors performed experiments demonstrating that VQML models, including variational linear quantum models, can still generalize better than quantum-kernel methods. This includes cases where regularization, $J$, was applied to the quantum-kernel method used. Thus, it appears that the connection between quantum-kernel methods and variational linear quantum models through the RKHS framework is limited. The authors of \cite{jerbi2021quantum} noted that the generalization advantage that VQML models can have  is based on the fact that, as mentioned, $\bm{O}_{\bm{\theta}}$ restricts the  space of functions that the variational model can represent. Thus, $\bm{O}_{\bm{\theta}}$ not being able to realize arbitrary observables may actually be advantageous and act as additional regularization. Identifying these benefits is important because we will be heavily focusing on variational linear quantum models throughout our work. In section~\ref{sec:embeddings}, we discuss quantum embeddings for encoding Boolean inputs, i.e. $\mathcal{A} := \mathbb{B}^n$, into variational linear quantum models.

\section{Quantum Embeddings for the Boolean Cube}
\label{sec:embeddings}
This section presents the main theoretical contributions of this article. We discuss two quantum embeddings for real-valued functions on the Boolean cube: the \emph{phase embedding} (Section \ref{sec:phase}) and the \emph{QRAC embedding} (Section \ref{sec:qrac}). 
As mentioned in Section \ref{sec:quantumcircuitlearning}, the nonlinearity in the input $\bm{x}$ for models like Equation \eqref{eqn:classical_linear_model}, both classical and quantum, comes from the feature map used. Thus, when analyzing the expressivity of a variational linear quantum model \eqref{eqn:variational_linear_quantum_model}, we will fix the embedding scheme $\rho(\bm{b})$, and determine the class of functions on the Boolean cube that the model can represent. However, when proving theorems, we will assume $\bm{W}(\bm{\theta})$ is universal so that it can enact arbitrary global unitaries on $m$ qubits, which implies that for any $m$-qubit unitary $\bm{V}$ there exists $\bm{\theta}$ such that $\bm{W}(\bm{\theta}) = \bm{V}$. In addition, this means $\bm{W}(\bm{\theta})$ may decompose into a number of  primitive gates that is exponential in $m$. This implies that for any $m$-qubit observable $\bm{M}$, there exists a setting of the parameters $\bm{\theta}$ and a diagonal observable $\bm{D}$ such that

\begin{align}
\bm{O}_{\bm{\theta}} = \bm{W}^{\dagger}(\bm{\theta})\bm{D}\bm{W}(\bm{\theta}) = \bm{M}.
\end{align}
Based on the assumptions just mentioned, if for $g \in \mathcal{G} := \{g: \mathbb{B}^n \rightarrow \mathbb{R}\}$ we can show the existence of an observable $\bm{M}^{(g)}$ such that $\forall \bm{b} \in \mathbb{B}^n: \Tr[\bm{M}^{(g)}\rho(\bm{b})] = g(\bm{b})$, then there exists a variational linear quantum model using $\bm{O}_{\bm{\theta}}^{(g)} = \bm{W}^{\dagger}(\bm{\theta})\bm{D}^{(g)}\bm{W}(\bm{\theta})$ such that $\forall \bm{b} \in \mathbb{B}^n$ one has $\Tr[\bm{O}_{\bm{\theta}}^{(g)}\rho(\bm{b})] = g(\bm{b})$, for some $\bm{D}^{(g)}$ and parameter setting $\bm{\theta}$. In Theorem \ref{thm:phase}, which applies to the phase embedding, we show the existence of such an observable $\bm{M}^{(g)}$ for all functions $g$ on the Boolean cube. The phase embedding produces an $n$-qubit product state for an $n$-bit input. For the QRAC embedding, the observable  $\bm{M}^{(g)}$ exists for a subclass of functions $g$ with degree $\leq \lceil \frac{n}{3} \rceil$. Unfortunately, the Fourier transform of a function in this subclass cannot be nonzero on all elements of $\mathbb{B}^{n}$ with Hamming weight $\leq \lceil \frac{n}{3} \rceil$. The exact conditions are presented in Theorems \ref{thm:qrac} and \ref{thm:permuted_spectra}. The product state $\rho_{\text{QE}}(\bm{b})$ consists of $\lceil\frac{n}{3}\rceil$ qubits for an $n$-bit input vector.

We also present sufficient conditions under which a classical ensemble of VQML models can express functions on the Boolean cube. More specifically, we call the summation of the outputs of multiple variational linear quantum models a \textit{classical ensemble of quantum models}, i.e.
\begin{equation}
\label{eqn:ensemble}
    \sum_{i \in \mathcal{D}}f^{(i)}_{\bm{\theta}_i}(\bm{b}),
\end{equation}
where each $f^{(i)}_{\bm{\theta}_i}$, indexed over a set $\mathcal{D}$, uses the same embedding scheme. Prior work has dealt with quantum ensembles, i.e. superpositions, of models \cite{schuld2018quantum,abbas2020quantum,araujo2020quantum,macaluso2020quantum}.
Our results on ensembles show the existence of a collection of observables $\{\bm{M}^{(g, i)}\}_{i \in \mathcal{D}}$ indexed from some set $\mathcal{D}$ such that $\forall \bm{b} \in \mathbb{B}^n$ one has
\begin{align}
\sum_{i \in \mathcal{D}} \Tr[\bm{M}^{(g, i)}\rho(\phi_{i}(\bm{b}))] = g(\bm{b}),
\end{align}
where each $\phi_i$ classically preprocesses the input bits. The potential impact that classical preprocessing of the input data can have on VQML models was acknowledged in \cite{Havl_ek_2019, 2020reupload, Schuld_2021, thumwanit2021trainable}. Because the ensemble is the sum of the outputs of the multiple linear quantum models, if analytic-gradient learning is used, then the parameters of the models can be updated in parallel.
For the phase embedding an ensemble of $\binom{n}{d}$ models is sufficient to express any function of $n$ input Boolean variables and degree at most $d$, see Theorem \ref{thm:phase_esemb}. The preprocessing functions select subsets of the input variables.
In the case of the QRAC embedding, according to Theorem \ref{thm:qrac_ensemb}, the class of expressible functions is all of those with degree $\leq \lceil \frac{n}{3}\rceil$ with the preprocessing functions being permutations, i.e. elements of the symmetric group on $n$ elements.

As mentioned earlier, data re-uploading models, can be converted into variational linear quantum models \cite{jerbi2021quantum}. This can be done approximately by introducing additional quantum registers to encode the gate parameters and additional controlled-rotation gates. There exist transformations that are exact, but they require either gate teleportation, which introduces additional classically-controlled rotation gates that are dependent on the input, or post selection. However, for the Boolean cube we show that standard variational linear quantum models are sufficient, and thus the mentioned transformations are not required. Although repeating the phase or QRAC embeddings sequentially, even without inserting trainable gates between repetitions, can provide some benefits, see Appendix \ref{sec:repeated_embeddings}.

In practice, to construct a model that can be efficiently implemented, we need to select $\bm{W}(\bm{\theta})$ such that it decomposes into $\mathcal{O}(\text{poly}(m))$ primitive gates and select $\bm{D}$ to be a linear combination of $\mathcal{O}(\text{poly}(m))$ elements from  $\{\mathbb{I}, \mathsf{Z}\}^{\otimes m}$. %
Such choices will introduce regularization, as mentioned in Section \ref{sec:quantumcircuitlearning}, and restrict the class of functions in the RKHS that can be represented. The goal of variational optimization will be to find a setting of the parameters, $\bm{\theta}$, if it exists, such that $\bm{O}^{(g)}_{\bm{\theta}} = \bm{M}^{(g)}$. In addition, even if such a choice of parameters exists, the ability to find $\bm{\theta}$ through variational optimization will also depend on the loss function landscape. The loss landscape for variational models has been observed to be difficult to navigate in practice when the PQC used is highly expressive \cite{mcclean2018barren, Bittel_2021}. The goal of the experiments in Section \ref{sec:experiments} is to demonstrate two cases in which the optimization is possible.

Lastly, we make a comment on the related work of Thumwanit \etal~\cite{thumwanit2021trainable}. The authors showed that Pauli rotations can be used encode discrete-valued inputs using fewer qubits than the total number of input bits by making use of a classical preprocessing function that maps tuples of input bits to trainable rotation angles. This introduces additional trainable parameters that are not present in the phase and QRAC embeddings. Also, it is not guaranteed that there always exists a mapping of multiple input bits to rotation angles that is sufficient for expressing the target real-valued function. 

\subsection{Phase Embedding}
\label{sec:phase}
The phase embedding that we investigate was considered in Schuld \etal~\cite{Schuld_2021} for continuous-valued inputs. They showed that if  the phase embedding is repeated $r$ times sequentially with trainable blocks in between repetitions or repeated $r$ times in parallel, then the output of the variational model is expressible as the $r^\text{th}$ cubic partial sum \cite{weisz2012summability} of a Fourier series:
    \begin{equation}
        \label{eqn:partial_fourier}
        \sum_{\substack{\bm{\omega} \in \mathbb{Z}^{n} : \lVert\bm{\omega}\rVert_{\infty} \leq r}} c_{\bm{\omega}}e^{i\bm{\omega}\cdot\bm{x}},
    \end{equation}
where $\bm{x}$ is the $n$-dimensional input.
The trainable blocks and observable control the $c_{\bm{\omega}}$'s. These models can arbitrarily approximate functions in $L_2([0, 2\pi]^n)$ if we assume the freedom to choose arbitrarily deep trainable blocks and an arbitrary observable, and utilize arbitrarily many repetitions of the embedding layer. More specifically, $\forall\epsilon > 0$ and any function in $L_2([0, 2\pi]^n)$  there exists a model using some number $r$ of repetitions of the phase embedding that approximates that function to error $\epsilon$ in $L_2$ norm. In this section, we show that any function on the Boolean cube can be represented by a variational linear quantum model, i.e. the case where $r=1$, that uses the phase embedding.

Let $\mathsf{H}$ be the Hadamard operator, and
\begin{align}
    \mathsf{X}^{(\bm{b})} := \bigotimes_{i=1}^n \mathsf{X}^{b_i},
\end{align}
where $\mathsf{X}$ is the Pauli-$\mathsf{X}$ operator.
For $\bm{b} \in \mathbb{B}^n$, we define the following to be the \emph{phase embedding unitary}:
\begin{equation}
\label{eqn:phaseembedunitary}
    \bm{U}_{\text{PE}}(\bm{b}) := \mathsf{H}^{\otimes n}\mathsf{X}^{(\bm{b})}.
\end{equation}
Moreover,
\begin{equation}
    \rho_{\text{PE}}(\bm{b}) := \bm{U}_{\text{PE}}(\bm{b})\ketbra{0_n}\bm{U}_{\text{PE}}^{\dagger}(\bm{b})
\end{equation}
is the associated feature state. The following theorem summarizes the expressiveness of variational linear quantum models that makes use of this embedding.

\begin{theorem} 
\label{thm:phase}

For any $g \in \mathcal{G}$, there exists an observable $\bm{O}$ such that $\forall \bm{b} \in \mathbb{B}^n$ one has $\Tr[\bm{O}\rho_{\operatorname{PE}}(\bm{b})] = g(\bm{b})$.

\end{theorem}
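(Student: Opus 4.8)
The plan is to compute the feature state $\rho_{\operatorname{PE}}(\bm{b})$ explicitly and show that $\Tr[\bm{O}\rho_{\operatorname{PE}}(\bm{b})]$ can realize the Fourier expansion $g(\bm{b}) = \sum_{\bm{s}} \widehat{g}(\bm{s}) \chi_{\bm{s}}(\bm{b})$ by choosing $\bm{O}$ as a suitable real linear combination of Pauli strings built from $\mathbb{I}$ and $\mathsf{X}$. First I would note that $\mathsf{H}\mathsf{X}^{b}\ket{0} = \mathsf{H}\ket{b} = \tfrac{1}{\sqrt2}(\ket{0} + (-1)^{b}\ket{1})$, so that $\bm{U}_{\operatorname{PE}}(\bm{b})\ket{0_n} = \bigotimes_{i=1}^{n} \tfrac{1}{\sqrt2}(\ket{0} + (-1)^{b_i}\ket{1})$; equivalently, this single-qubit state is the $+1$ eigenstate of $(-1)^{b_i}\mathsf{X}$, i.e. $\mathsf{X}$ acts on it as multiplication by $(-1)^{b_i}$. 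Hence for any $\bm{s} \in \mathbb{B}^n$ the Pauli string $\mathsf{X}^{(\bm{s})} := \bigotimes_{i=1}^{n}\mathsf{X}^{s_i}$ satisfies
\begin{equation}
\Tr\!\left[\mathsf{X}^{(\bm{s})}\rho_{\operatorname{PE}}(\bm{b})\right] = \prod_{i:\,s_i=1}(-1)^{b_i} = \chi_{\bm{s}}(\bm{b}).
\end{equation}

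The key step is then to define
\begin{equation}
\bm{O} := \sum_{\bm{s}\in\mathbb{B}^n}\widehat{g}(\bm{s})\,\mathsf{X}^{(\bm{s})},
\end{equation}
which is Hermitian since each $\widehat{g}(\bm{s})$ is real and each $\mathsf{X}^{(\bm{s})}$ is Hermitian. By linearity of the trace,
\begin{equation}
\Tr[\bm{O}\rho_{\operatorname{PE}}(\bm{b})] = \sum_{\bm{s}\in\mathbb{B}^n}\widehat{g}(\bm{s})\,\chi_{\bm{s}}(\bm{b}) = g(\bm{b}),
\end{equation}
for all $\bm{b}\in\mathbb{B}^n$, using the Fourier expansion \eqref{eqn:fourierseries}. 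To connect this with the variational model as set up in the paper, I would invoke the assumption (stated just before the theorem) that $\bm{W}(\bm{\theta})$ can realize arbitrary global unitaries: since $\mathsf{X}^{(\bm{s})} = \mathsf{H}^{\otimes n}\mathsf{Z}^{(\bm{s})}\mathsf{H}^{\otimes n}$ with $\mathsf{Z}^{(\bm{s})} := \bigotimes_i \mathsf{Z}^{s_i}$ diagonal, the observable $\bm{O} = \mathsf{H}^{\otimes n}\big(\sum_{\bm{s}}\widehat{g}(\bm{s})\mathsf{Z}^{(\bm{s})}\big)\mathsf{H}^{\otimes n}$ is exactly of the form $\bm{W}^\dagger(\bm{\theta})\bm{D}\bm{W}(\bm{\theta})$ with $\bm{D} = \sum_{\bm{s}}\widehat{g}(\bm{s})\mathsf{Z}^{(\bm{s})} \in \operatorname{span}_{\mathbb{R}}\{\mathbb{I},\mathsf{Z}\}^{\otimes n}$ and $\bm{W}(\bm{\theta}) = \mathsf{H}^{\otimes n}$.

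I do not anticipate a serious obstacle: the argument is essentially a change of basis that turns the phase embedding into the computational basis, under which the Fourier characters $\chi_{\bm{s}}$ become the diagonal Pauli-$\mathsf{Z}$ strings. The only point requiring a little care is verifying that the tensor-product structure of $\rho_{\operatorname{PE}}(\bm{b})$ lets the expectation of $\mathsf{X}^{(\bm{s})}$ factor across qubits — which follows because $\rho_{\operatorname{PE}}(\bm{b})$ is a product state — and confirming that $\bm{D}$ as constructed indeed lies in the allowed span (it does, since each $\mathsf{Z}^{(\bm{s})}$ is a tensor factor of $\mathbb{I}$'s and $\mathsf{Z}$'s and the coefficients are real). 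An alternative, perhaps cleaner, route would be to work directly in the computational basis: note $\rho_{\operatorname{PE}}(\bm{b}) = \mathsf{H}^{\otimes n}\ketbra{\bm{b}}\mathsf{H}^{\otimes n}$, define $\bm{D}$ diagonal with entries $\bm{D}_{\bm{b},\bm{b}} = g(\bm{b})$, and observe that $\Tr[\mathsf{H}^{\otimes n}\bm{D}\mathsf{H}^{\otimes n}\rho_{\operatorname{PE}}(\bm{b})] = \bra{\bm{b}}\bm{D}\ket{\bm{b}} = g(\bm{b})$; the Fourier-analytic phrasing above is then just the statement that this diagonal $\bm{D}$ expands as $\sum_{\bm{s}}\widehat{g}(\bm{s})\mathsf{Z}^{(\bm{s})}$.
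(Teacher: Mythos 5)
Your proof is correct and is essentially the paper's own argument in a different notation: the observable $\bm{O}=\sum_{\bm{s}}\widehat{g}(\bm{s})\mathsf{X}^{(\bm{s})}$ is exactly the matrix with entries $\bm{O}_{\bm{k},\bm{j}}=\widehat{g}(\bm{k}\oplus\bm{j})$ used in the paper, and evaluating $\Tr[\mathsf{X}^{(\bm{s})}\rho_{\operatorname{PE}}(\bm{b})]=\chi_{\bm{s}}(\bm{b})$ term by term is the same computation the paper performs entrywise via the counting of pairs with $\bm{k}\oplus\bm{j}=\bm{s}$. Your closing observation that $\bm{O}=\mathsf{H}^{\otimes n}\bm{D}\mathsf{H}^{\otimes n}$ with $\bm{D}\in\operatorname{span}_{\mathbb{R}}\{\mathbb{I},\mathsf{Z}\}^{\otimes n}$ matches the paper's remark following the theorem.
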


\begin{proof}
Suppose $\bm{O}$ is an observable such that in the computational basis its entries are $\bm{O}_{\bm{k}, \bm{j}} = \widehat{g}(\bm{k}\oplus\bm{j})$. Then,
\begin{align}
\begin{split}
    \label{eqn:phase}
     \Tr[\bm{O}\rho_{\text{PE}}(\bm{b})] &= \bra{0_n}\mathsf{X}^{(\bm{b})}\mathsf{H}^{\otimes n}\bm{O}\mathsf{H}^{\otimes n}\mathsf{X}^{(\bm{b})}\ket{0_n}
    \\&= \frac{1}{2^n}\sum_{\bm{k} \in \mathbb{B}^n}\sum_{\bm{j} \in \mathbb{B}^n}(-1)^{\bm{b}\cdot(\bm{k}\oplus \bm{j})}\bm{O}_{\bm{k},\bm{j}} \\
    &= \sum\limits_{\bm{s} \in \mathbb{B}^n}\widehat{g}(\bm{s})\chi_{\bm{s}}(\bm{b}) 
    \\&(\text{since}~ |\{(\bm{k}, \bm{j}) \in \mathbb{B}^n \times  \mathbb{B}^n~| ~\bm{k} \oplus \bm{j} = \bm{s}\}| = 2^n)
    \\&= g(\bm{b}),
\end{split}
\end{align}
as required.
\end{proof}
Additionally, $\bm{O}$ can always be diagonalized: $\bm{O} = \bm{V}\bm{D}\bm{V}^{\dagger}$ and implemented by a variational linear quantum model with universal trainable gates, i.e. $\bm{W}(\bm{\theta}) = \bm{V}^{\dagger}$. The model uses a measurement $\bm{D} \in \textup{span}_\mathbb{R}\{ \mathbb{I}, \mathsf{Z}\}^{\otimes m}$. Each diagonal Pauli tensor in the linear combination that represents $\bm{D}$ can be measured using separate circuit runs, and the expectation value is then scaled by the corresponding coefficient of that diagonal Pauli tensor. In practice, $\bm{D}$ is typically chosen to be a simple and easy to evaluate observable, such as $\mathsf{Z}^{\otimes m}$, instead of being chosen arbitrarily from $\textup{span}_\mathbb{R}\{ \mathbb{I}, \mathsf{Z}\}^{\otimes m}$ as done in the proof of Theorem \ref{thm:phase}.  Regardless,  if $\bm{W}(\bm{\theta})$ can implement arbitrary unitary operators, the model can only implement observables that have the same spectrum as $\bm{D}$, i.e. $\lVert g\rVert_{\infty}$ remains bounded.
We leave as an open question as to whether interesting classes of functions exist that can be expressed when the spectrum of $\bm{O}_{\bm{\theta}}$ is fixed, i.e. fixed $\bm{D}$ but varying $\bm{W}(\bm{\theta})$.

If we have prior knowledge that the function is a $k$-junta, we can potentially reduce the number of qubits that the trainable portion of the model acts on utilizing a variational $\mathsf{SWAP}$ network. This is discussed in more detail in Appendix \ref{sec:phase_swaps}.

The name ``phase embedding'' comes from the fact that it maps the  input bits to $\ket{\pm}$, i.e. $\mathsf{X}$-basis states. Equivalently, the phase embedding unitary can be replaced by $\mathsf{Z}^{(\bm{b})}\mathsf{H}^{\otimes n}$, where $\mathsf{Z}$ is the Pauli-$\mathsf{Z}$ operator. The phase embedding for continuous inputs, described in \cite{Schuld_2021}, was of the form:
\begin{equation}
\label{eqn:continuous_phase_embed}
     \bigotimes_{i=1}^{n}(\bm{R}_{\mathsf{Z}}(x_i)\mathsf{H}),
\end{equation}
where $\bm{R}_{\mathsf{Z}}$ is a rotation generated by $\mathsf{Z}$, and $\bm{x} \in [0, 2\pi]^n$. More specifically, this is the operator $\bm{R}_{\mathsf{Z}}(\theta) := e^{-i\frac{\theta}{2}\mathsf{Z}}$.

The connection to Equation \eqref{eqn:phaseembedunitary} is made by using the restriction of Equation \eqref{eqn:continuous_phase_embed} to the set $\{0, \pi\}^n$, which can be seen to be equal to
\begin{equation}
\label{eqn:z_phase_embed}
  \bigotimes_{i=1}^{n}(\mathsf{Z}^{b_i}\mathsf{H}) = \mathsf{Z}^{(\bm{b})}\mathsf{H}^{\otimes n},
\end{equation}
up to a global phase.
Based on the proof given above, we define the Fourier coefficients of a variational linear quantum model, Equation \eqref{eqn:variational_linear_quantum_model}, that uses the phase embedding to be
\begin{equation}
\label{eqn:phase_fourier_coeffs}
\widehat{f}_{\bm{\theta}}(\bm{s}) = \frac{1}{2^n}\sumtwo_{\substack{~\bm{k} \in \mathbb{B}^n~\bm{j} \in \mathbb{B}^{n} \\ \bm{k}\oplus\bm{j} = \bm{s}}}(\bm{O}_{\bm{\theta}})_{\bm{k},\bm{j}}.
\end{equation}
This definition will be used in Section \ref{sec:experiments} to see how well the model was able to fit the Fourier coefficients of the target function using supervised learning. 

We now present sufficient conditions under which an ensemble of models utilizing the phase embedding can represent functions on $n$ bits utilizing fewer than $n$ qubits. Let $\text{wt}(\cdot)$ compute the Hamming weight of a binary vector. Additionally, for each $d \in [n]$, consider the set
\begin{equation}
    \mathcal{C}_{n, d} := \{ (i_1, i_2, \dots, i_d) \in [n]^{d}~|~i_1 < i_2 < \cdots < i_d\},
\end{equation}
which contains an ordered $d$-tuple, for each subset of $[n]$ of size $d$. Thus, it follows that $\lvert \mathcal{C}_{n,d} \rvert = \binom{n}{d}$. For each $\bm{w} \in \mathcal{C}_{n,d}$, we define a function $\nu_{\bm{w}} : \mathbb{B}^{n} \xrightarrow[]{} \mathbb{B}^{d}$ such that $\nu_{\bm{w}}(\bm{b}) = (b_{w_1}, b_{w_2}, \dots, b_{w_d})$, i.e. selects a subset of the entries of $\bm{b}$ indicated by $\bm{w}$. 

\begin{theorem}
\label{thm:phase_esemb}
For any $g \in \mathcal{G}$ with degree $d \in [n]$ there exists a collection of observables $\{\bm{O}^{(\bm{w})}\}_{\bm{w} \in \mathcal{C}_{n,d}}$ such that $\forall \bm{b} \in \mathbb{B}^n$ one has $\sum_{\bm{w} \in \mathcal{C}_{n,d}}\Tr[\bm{O}^{(\bm{w})}\rho_{\operatorname{PE}}(\nu_{\bm{w}}(\bm{b}))] = g(\bm{b})$, where $\rho_{\operatorname{PE}}(\bm{b})$ is a $d$-qubit state.
\end{theorem}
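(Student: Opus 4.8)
The plan is to decompose the target function $g$ by Fourier degree into pieces supported on each $d$-subset of the coordinates, then apply Theorem~\ref{thm:phase} to each piece on $d$ qubits. Concretely, for each $\bm{w} \in \mathcal{C}_{n,d}$ let $S_{\bm{w}} \subseteq \mathbb{B}^n$ be the set of frequencies $\bm{s}$ whose support (the set of indices where $s_i = 1$) is exactly $\{w_1,\dots,w_d\}$, i.e.\ the frequencies of full Hamming weight $d$ concentrated on the coordinates picked out by $\bm{w}$. Since $g$ has degree $d$, every frequency $\bm{s}$ with $\widehat{g}(\bm{s}) \neq 0$ has $\mathrm{wt}(\bm{s}) \leq d$; the frequencies with $\mathrm{wt}(\bm{s}) < d$ are an annoyance that must be handled, so the first real step is to observe that any frequency of weight $d' \leq d$ sits inside at least one of the size-$d$ index sets, and to fix a rule (e.g.\ distribute $\widehat{g}(\bm{s})$ to exactly one chosen $\bm{w} \supseteq \mathrm{supp}(\bm{s})$, or split it with weights summing to $1$) so that each $\widehat{g}(\bm{s})$ is accounted for exactly once across the collection. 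Call the resulting degree-$\leq d$ function on $d$ bits $g^{(\bm{w})} \in \{h : \mathbb{B}^d \to \mathbb{R}\}$, defined by $\widehat{g^{(\bm{w})}}(\bm{t}) := \sum_{\bm{s}} \widehat{g}(\bm{s})$ over those $\bm{s}$ assigned to $\bm{w}$ whose support, re-indexed via $\bm{w}$, equals $\mathrm{supp}(\bm{t})$.

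Next I would verify the identity $g(\bm{b}) = \sum_{\bm{w} \in \mathcal{C}_{n,d}} g^{(\bm{w})}(\nu_{\bm{w}}(\bm{b}))$ for all $\bm{b} \in \mathbb{B}^n$. This is a direct Fourier computation: the key point is that for $\bm{s}$ with support contained in $\{w_1,\dots,w_d\}$, the parity $\chi_{\bm{s}}(\bm{b})$ depends only on $\nu_{\bm{w}}(\bm{b})$ and equals $\chi_{\bm{t}}(\nu_{\bm{w}}(\bm{b}))$ where $\bm{t} \in \mathbb{B}^d$ is the restriction of $\bm{s}$. Expanding each $g^{(\bm{w})}$ in its Fourier series on $\mathbb{B}^d$, summing over $\bm{w}$, and using that the assignment of frequencies to subsets is a partition, the right-hand side collapses to $\sum_{\bm{s}} \widehat{g}(\bm{s})\chi_{\bm{s}}(\bm{b}) = g(\bm{b})$.

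Finally, I would apply Theorem~\ref{thm:phase} on $d$ qubits to each $g^{(\bm{w})}$: there exists an observable $\bm{O}^{(\bm{w})}$ with $\Tr[\bm{O}^{(\bm{w})}\rho_{\operatorname{PE}}(\bm{c})] = g^{(\bm{w})}(\bm{c})$ for all $\bm{c} \in \mathbb{B}^d$. Substituting $\bm{c} = \nu_{\bm{w}}(\bm{b})$ and summing over $\bm{w} \in \mathcal{C}_{n,d}$ gives $\sum_{\bm{w}} \Tr[\bm{O}^{(\bm{w})}\rho_{\operatorname{PE}}(\nu_{\bm{w}}(\bm{b}))] = \sum_{\bm{w}} g^{(\bm{w})}(\nu_{\bm{w}}(\bm{b})) = g(\bm{b})$, and since $|\mathcal{C}_{n,d}| = \binom{n}{d}$ the ensemble has the claimed size. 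The main obstacle is purely bookkeeping: making the decomposition of $g$ into the $g^{(\bm{w})}$ clean and unambiguous, especially deciding how to allocate low-weight frequencies (weight $< d$) among the several $d$-subsets that contain them, so that the partition property used in the verification step genuinely holds; once that allocation rule is pinned down, the rest is routine.
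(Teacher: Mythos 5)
Your proposal is correct and follows essentially the same route as the paper's proof: decompose $g$ into pieces $g^{(\bm{w})}$ supported on the coordinates indexed by each $\bm{w} \in \mathcal{C}_{n,d}$, use the fact that $\chi_{\bm{s}}(\bm{b}) = \chi_{\bm{s}}(\eta_{\bm{w}}\circ\nu_{\bm{w}}(\bm{b}))$ whenever $\mathrm{supp}(\bm{s}) \subseteq \bm{w}$, and apply Theorem~\ref{thm:phase} on $d$ qubits to each piece. The allocation issue you flag is resolved in the paper by exactly one of the rules you propose, namely splitting each coefficient uniformly as $\widehat{g}(\bm{s})/k_{\bm{s}}$ over all $k_{\bm{s}}$ subsets containing $\mathrm{supp}(\bm{s})$, so there is no gap.
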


\begin{proof}
Let $\eta_{\bm{w}}:\mathbb{B}^{d} \xrightarrow[]{} \mathbb{B}^{n}$ be the right inverse of $\nu_{\bm{w}}$ that sets all entries with indices not in $\bm{w}$ to zero, i.e. $\nu_{\bm{w}} \circ \eta_{\bm{b}}$ is the identity function on $\mathbb{B}^{d}$. Let $k_{\bm{s}} := \binom{n}{n -d + \text{wt}(\bm{s})}$ for all $\bm{s} \in \{ \bm{b} \in \mathbb{B}^n~|~\text{wt}(\bm{b}) \leq d\}$. For each $\bm{w} \in \mathcal{C}_{n, d}$, let $\widehat{\bm{w}} := \{\bm{s} \in \mathbb{B}^n | s_i = 1 \implies i \in \bm{w}\}$ and define the function $g^{(\bm{w})}:\mathbb{B}^{d} \xrightarrow[]{} \mathbb{R}$ as follows

\begin{equation}
    \label{eqn:component_g}
    g^{(\bm{w})}(\bm{b}) := \sum_{\bm{s} \in \widehat{\bm{w}}}\frac{\widehat{g}(\bm{s})}{k_{\bm{s}}}\chi_{\bm{s}}(\eta_{\bm{w}}(\bm{b})).
\end{equation}
Note that for all $\chi_{\bm{s}}$ in \eqref{eqn:component_g} $\chi_{\bm{s}}(\eta_{\bm{w}}\circ \nu_{\bm{w}}(\bm{b})) = \chi_{\bm{s}}(\bm{b})$ since $g^{(\bm{w})}$ only depends on the input bits indexed by $\bm{w}$.
Thus, $\forall \bm{b} \in \mathbb{B}^n$
\begin{equation}
    g(\bm{b}) = \sum_{\bm{w} \in \mathcal{C}_{n, d}}g^{(\bm{w})}(\nu_{\bm{w}}(\bm{b})),
\end{equation}
since $\frac{\widehat{g}(\bm{s})}{k_{\bm{s}}}\chi_{\bm{s}}$ appears $k_{\bm{s}}$ times in the sum. By Theorem \ref{thm:phase}, for each $g^{(\bm{w})}$ there exists an observable $\bm{O}^{(\bm{w})}$ such that $\forall \bm{b}' \in \mathbb{B}^{d}$, $\Tr[\bm{O}^{(\bm{w})}\rho_{\operatorname{PE}}(\bm{b}')] = g^{(\bm{w})}(\bm{b}')$. Thus, it follows that $\forall \bm{b} \in \mathbb{B}^n$, 
\begin{equation}
    \sum_{\bm{w} \in \mathcal{C}_{n,d}}\Tr[\bm{O}^{(\bm{w})}\rho_{\operatorname{PE}}(\nu_{\bm{w}}(\bm{b}))]  = g(\bm{b}), 
\end{equation}
as required.
\end{proof}
This proof shows that the upper bound on the number of models in the ensemble, using the phase embedding, is $\binom{n}{d}$ for a degree $d$ function. When training, we can utilize a validation data set to determine if the size of the ensemble is sufficient. However, restricting the size of the ensemble also acts as a regularizer.

Alternatively, we can make use of repetitions of the embedding to increase the expressivity of a single model that uses the phase embedding. Consider the following $m$-qubit embedding with $r$ data-encoding steps:

\begin{equation}
    \prod_{j=1}^{r}\mathsf{Z}^{(\nu_{\bm{w}_j}(\bm{b}))}\bm{V}_{\bm{w}_j},
\end{equation}

where $\{\bm{w}_j\} \subset \mathcal{C}_{n,m}$ for $m < n$. Note we have used the equivalent representation of the phase embedding presented in \eqref{eqn:z_phase_embed}. The unitary operators $V_{\bm{w}_j}$ could be fixed or trainable. If they are fixed, then the overall model is still a linear quantum model, i.e. not a data re-uploading model, as all embedding steps come before the trainable components. However, we ensure that $V_{\bm{w}_1} = \mathsf{H}^{\otimes m}$. In Appendix \ref{sec:phase_improving_through_rep}, we demonstrate that a model using this embedding can be expressed as a non-trivial linear combination of all Fourier basis terms when $r=n/m$ and $m$ divides $n$. Thus it can express degree-$n$ functions on $\mathbb{B}^n$ with fewer than $n$ qubits. 

In what follows, we discuss a different scheme that encodes multiple bits into a single qubit.

\subsection{QRAC Embedding}
\label{sec:qrac}
In this section, we analyze a quantum embedding, first described by Yano \etal~\cite{yano2020efficient}, based on QRACs. However, the authors did not perform any analysis of the expressivity of models that make use of it. QRACs have also been used to encode MaxCut problems solved by variational quantum optimization algorithms \cite{fuller2021approximate}. We develop sufficient conditions for variational linear quantum models using the QRAC embedding to be able to express functions on the Boolean cube. Moreover, we show that an ensemble of models each using this embedding and $\lceil \frac{n}{3} \rceil$ qubits is  sufficient to represent any function on $\mathbb{B}^n$ with degree $d \leq \lceil \frac{n}{3} \rceil$. While this embedding can only provide a constant-factor saving in terms of the number of qubits, it could be impactful during the era of small and noisy quantum hardware provided that efficient and useful PQCs can be constructed. Like for the phase embedding in Theorem \ref{thm:phase_esemb}, the qubit reduction provided by the QRAC embedding comes from a set of classical preprocessing functions, described below.  We use the (3,1)--QRAC, which is a three-bits-to-one-qubit probabilistic encoding scheme. It was introduced by Ike Chuang and first mentioned in \cite{ambainisqrac}. Thus, the number of (3,1)--QRACs used to encode an $n$-bit input is $\lceil \frac{n}{3}\rceil$, which requires padding with passive variables if the input $\bm{b}$ is a tuple whose length is not divisible by 3. Note that this only increases the input length by at most two more zero bits and does not change the number of qubits used. Due to this, we assume, without loss of generality, $n \equiv 0$ mod 3.

We start by dividing the input $\bm{b}$ into $\frac{n}{3}$ triplets $\mathcal{B}_1, \mathcal{B}_2, \dots, \mathcal{B}_{\frac{n}{3}}$, each of which is an element of $\mathbb{B}^3$. The entries of the $i^\text{th}$ triplet, $\mathcal{B}_i$, are indexed by the symbols for Pauli operators: $\mathcal{B}_i^{(\mathsf{X})} = b_{3i-2}$, $\mathcal{B}_i^{(\mathsf{Y})} = b_{3i-1}$, and $\mathcal{B}_i^{(\mathsf{Z})} = b_{3i}$. In addition, we define $\mathcal{B}_i^{(\mathbb{I})} := 0$. For some angles $\alpha_1$ and $\alpha_2$, the $i^\text{th}$ triplet can be encoded in the following single-qubit state:
\begin{align}
\label{eqn:gen_qrac}
&\sigma(\mathcal{B}_i)_{\alpha_1, \alpha_2} := \frac{1}{2} \Bigg(\mathbb{I} + (\sin(\alpha_2)\cos(\alpha_1))(-1)^{\mathcal{B}_i^{(\mathsf{X})}}\mathsf{X} \nonumber\\&+ (\sin(\alpha_2)\sin(\alpha_1))(-1)^{\mathcal{B}_i^{(\mathsf{Y})}}\mathsf{Y} + \cos(\alpha_2)(-1)^{\mathcal{B}_i^{(\mathsf{Z})}}\mathsf{Z} \Bigg).
\end{align} 
The unitary that takes $\ketbra{0}$ to $\sigma(\mathcal{B}_i)_{\alpha_1, \alpha_2}$ can be expressed as a composition of two rotations:
\begin{equation}
\label{eqn:qrac_rotations}
\bm{R}_{\mathsf{Z}}(\phi_{\mathsf{Z}}(\alpha_1, \mathcal{B}_i^{(\mathsf{X})},\mathcal{B}_i^{(\mathsf{Y})}))\bm{R}_{\mathsf{Y}}(\phi_{\mathsf{Y}}(\alpha_2, \mathcal{B}_i^{(\mathsf{Z})})),
\end{equation}
where 
\begin{equation}
\begin{split}
    \phi_{\mathsf{Z}}(\alpha_1, \mathcal{B}_i^{(\mathsf{X})},\mathcal{B}_i^{(\mathsf{Y})}) &= (\mathcal{B}_i^{(\mathsf{X})} (\pi - \alpha_1) + (1-\mathcal{B}_i^{(\mathsf{X})})\alpha_1)(-1)^{\mathcal{B}_i^{(\mathsf{Y})}}, \\ 
    \phi_{\mathsf{Y}}(\alpha_2, \mathcal{B}_i^{(\mathsf{Z})}) &= \mathcal{B}_i^{(\mathsf{Z})} (\pi - \alpha_2) + (1-\mathcal{B}_i^{(\mathsf{Z})})\alpha_2.
\end{split}
\end{equation}
Specifically, when
$\alpha_1 = \frac{\pi}{4}$ and $\alpha_2 = 2\cos^{-1}\left(\sqrt{\frac{1}{2} + \frac{1}{2\sqrt{3}}}\right)$ we obtain the (3,1)--QRAC state \cite{ALMO08}:
\begin{align}
\label{eqn:single_qrac_state}
&\sigma(\mathcal{B}_i) := \bm{U}_{3,1}(\mathcal{B}_i)\ketbra{0}\bm{U}_{3,1}^{\dagger}(\mathcal{B}_i) \\&= \frac{1}{2} \left(\mathbb{I} + \frac{(-1)^{\mathcal{B}_i^{(\mathsf{X})}}}{\sqrt{3}}\mathsf{X} + \frac{(-1)^{\mathcal{B}_i^{(\mathsf{Y})}}}{\sqrt{3}}\mathsf{Y} + \frac{(-1)^{\mathcal{B}_i^{(\mathsf{Z})}}}{\sqrt{3}}\mathsf{Z} \right),\nonumber
\end{align}
where $\bm{U}_{3,1}(\mathcal{B}_i)$ is the operation in Equation \eqref{eqn:qrac_rotations} for the specific assignments of $\alpha_1$ and $\alpha_2$ mentioned above. These choices for $\alpha_1$ and $ \alpha_2$ maximize the probability of recovering a single bit when measuring along one of the three Bloch-sphere axes. While we will be making use of the state in Equation \eqref{eqn:single_qrac_state}, in our case, all of the results that follow would still hold if we had utilized any state with the form presented in Equation \eqref{eqn:gen_qrac} and  different $\alpha_1$ and $\alpha_2$, as long as all coefficients of the Pauli operators are nonzero. The reason for this is that the proofs that follow only depend on the relationships between $\mathcal{B}_i^{(\mathsf{X})}, \mathcal{B}_i^{(\mathsf{Y})}, \mathcal{B}_i^{(\mathsf{Z})}$ and the powers of $(-1)$ that appear in the coefficients of the Pauli terms.

Generalizing to $n > 3$, we can encode $\bm{b} \in \mathbb{B}^n$ by using $\frac{n}{3}$ qubits in the following product state: 
\begin{align}
    \label{eqn:qracstate}
    &\rho_{\text{QE}}(\bm{b}) := \bigotimes_{i=1}^{n/3} \sigma(\mathcal{B}_i) \\&= \frac{1}{2^{n/3}}\bigotimes_{i=1}^{n/3} \left(\mathbb{I} + \frac{(-1)^{\mathcal{B}_i^{(\mathsf{X})}}}{\sqrt{3}}\mathsf{X} + \frac{(-1)^{\mathcal{B}_i^{(\mathsf{Y})}}}{\sqrt{3}}\mathsf{Y} + \frac{(-1)^{\mathcal{B}_i^{(\mathsf{Z})}}}{\sqrt{3}}\mathsf{Z} \right).\nonumber
\end{align} Thus, the \emph{QRAC embedding unitary} is
\begin{equation}
    \label{eqn:qrac_embedding_unitary}
    \bm{U}_{\text{QE}}(\bm{b}) := \bigotimes_{i=1}^{n/3}\bm{U}_{3,1}(\mathcal{B}_i).
\end{equation}
Lastly, $\forall m \in \mathbb{N}$, we define the sets 
\begin{equation}
    \mathcal{K}^{\operatorname{QE}}_{m} = \{ \bm{s} \in \mathbb{B}^{3m}~|~\forall i \in [m]: \text{wt}((s_{3i-2},s_{3i-1}, s_{3i}))\leq 1\},
\end{equation}
where $\text{wt}(\cdot)$ computes the Hamming weight of a binary vector. For a fixed $m$, $\mathcal{K}_{m}^{\operatorname{QE}}$ contains all elements, $\bm{s}$, of $\mathbb{B}^{3m}$ with Hamming weight $m$ such that $\forall i \in [\frac{n}{3}]$, the $i$-th triplet, $\mathcal{B}_i$, of $\bm{s}$ has Hamming weight at most one. These sets
will play a role in the results that follow.

The first result of Section \ref{sec:qrac} is Theorem \ref{thm:qrac}. This result presents sufficient conditions for a function on the Boolean cube to be expressible by variational linear quantum models that use the QRAC embedding.
\begin{theorem}[]
\label{thm:qrac}
For any $g \in \mathcal{G}$ such that  $\widehat{g}(\bm{s}) \neq 0 \implies \bm{s} \in \mathcal{K}^{\operatorname{QE}}_{\frac{n}{3}}$, there exists an observable $\bm{O}$ such that $\forall \bm{b} \in \mathbb{B}^n$ one has  $\Tr[\bm{O}\rho_{\operatorname{QE}}(\bm{b})] = g(\bm{b})$.
\end{theorem}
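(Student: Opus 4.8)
The plan is to work entirely in the Pauli basis and match coefficients. First I would expand the product state \eqref{eqn:qracstate}: multiplying out the tensor product, every $\tfrac n3$-qubit Pauli string $P = P_1 \otimes \cdots \otimes P_{n/3}$ with $P_i \in \{\mathbb{I}, \mathsf{X}, \mathsf{Y}, \mathsf{Z}\}$ occurs exactly once, with coefficient $\frac{1}{2^{n/3}}\prod_{i=1}^{n/3} c_i(P_i,\mathcal{B}_i)$, where $c_i(\mathbb{I},\mathcal{B}_i)=1$ and $c_i(Q,\mathcal{B}_i)=\frac{1}{\sqrt 3}(-1)^{\mathcal{B}_i^{(Q)}}$ for $Q\in\{\mathsf{X},\mathsf{Y},\mathsf{Z}\}$.

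The combinatorial heart of the argument is a bijection $P \leftrightarrow \bm{s}(P)$ between the $4^{n/3}$ Pauli strings on $\tfrac n3$ qubits and the set $\mathcal{K}^{\operatorname{QE}}_{\frac{n}{3}}$ (which, by its definition, is exactly the set of $\bm{s}\in\mathbb{B}^n$ with at most one nonzero entry in each triplet). I would define $\bm{s}(P)$ by sending $P_i=\mathbb{I}$ to the all-zero triplet, $P_i=\mathsf{X}$ to $(1,0,0)$, $P_i=\mathsf{Y}$ to $(0,1,0)$, and $P_i=\mathsf{Z}$ to $(0,0,1)$ in positions $(3i-2,3i-1,3i)$. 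Since $\mathcal{B}_i^{(\mathsf{X})}=b_{3i-2}$, $\mathcal{B}_i^{(\mathsf{Y})}=b_{3i-1}$, $\mathcal{B}_i^{(\mathsf{Z})}=b_{3i}$, the sign part of the coefficient is $\prod_{i:P_i\neq\mathbb{I}}(-1)^{\mathcal{B}_i^{(P_i)}} = (-1)^{\bm{s}(P)\cdot\bm{b}} = \chi_{\bm{s}(P)}(\bm{b})$, while the magnitude part is $3^{-\text{wt}(\bm{s}(P))/2}$, using that $\text{wt}(\bm{s}(P))$ equals the number of non-identity tensor factors of $P$. Hence the coefficient of $P$ in $\rho_{\operatorname{QE}}(\bm{b})$ is precisely $\frac{1}{2^{n/3}}\,3^{-\text{wt}(\bm{s}(P))/2}\,\chi_{\bm{s}(P)}(\bm{b})$.

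With this in hand, I would exhibit the observable directly: set $\bm{O} := \sum_P a_P\, P$ with the real scalars $a_P := 3^{\text{wt}(\bm{s}(P))/2}\,\widehat{g}(\bm{s}(P))$, which is Hermitian since it is a real linear combination of (Hermitian) Pauli strings. Using the orthogonality $\Tr[P P'] = 2^{n/3}\delta_{P,P'}$ together with the expansion above,
\[
\Tr[\bm{O}\rho_{\operatorname{QE}}(\bm{b})] \;=\; \sum_P a_P\, 3^{-\text{wt}(\bm{s}(P))/2}\,\chi_{\bm{s}(P)}(\bm{b}) \;=\; \sum_{\bm{s}\in\mathcal{K}^{\operatorname{QE}}_{\frac{n}{3}}} \widehat{g}(\bm{s})\,\chi_{\bm{s}}(\bm{b}).
\]
Because the hypothesis $\widehat{g}(\bm{s})\neq 0 \implies \bm{s}\in\mathcal{K}^{\operatorname{QE}}_{\frac{n}{3}}$ says the Fourier expansion \eqref{eqn:fourierseries} of $g$ is supported on $\mathcal{K}^{\operatorname{QE}}_{\frac{n}{3}}$, the right-hand side is exactly $g(\bm{b})$, which completes the proof.

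I expect the only genuine work to be bookkeeping: confirming that the tensor-product expansion of \eqref{eqn:qracstate} yields each Pauli string once with the stated coefficient, and checking that the Pauli-label-to-bit-position dictionary converts $\prod_{i:P_i\neq\mathbb{I}}(-1)^{\mathcal{B}_i^{(P_i)}}$ into the character $\chi_{\bm{s}(P)}$. The conceptual crux is recognizing that the ``at most one nonzero entry per triplet'' constraint defining $\mathcal{K}^{\operatorname{QE}}_{\frac{n}{3}}$ is precisely the image of the one-Pauli-per-qubit expansion, making $P\mapsto\bm{s}(P)$ a bijection onto $\mathcal{K}^{\operatorname{QE}}_{\frac{n}{3}}$; everything else is linear algebra in the Pauli basis. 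I would add the remark that the same argument works verbatim for the more general single-qubit encoding \eqref{eqn:gen_qrac} whenever all three Pauli coefficients are nonzero, since only their nonvanishing — not their particular values — is used when solving for $a_P$.
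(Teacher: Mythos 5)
Your proposal is correct and follows essentially the same route as the paper's proof: expand $\rho_{\operatorname{QE}}(\bm{b})$ in the Pauli basis, use the bijection between Pauli strings and $\mathcal{K}^{\operatorname{QE}}_{n/3}$ (the paper's map $\Phi$), and build $\bm{O}$ as a real linear combination of Pauli strings whose coefficients are the rescaled Fourier coefficients of $g$, then conclude by Pauli orthogonality. Your normalization $a_P = 3^{\operatorname{wt}(\bm{s}(P))/2}\widehat{g}(\bm{s}(P))$ is self-consistent with $\Tr[\bm{P}\bm{P}'] = 2^{n/3}\delta_{\bm{P},\bm{P}'}$ (the paper's stated coefficient carries an extra factor of $2^{m}$ relative to this convention, an inessential bookkeeping discrepancy), and your closing remark about general $\alpha_1,\alpha_2$ with nonvanishing Pauli coefficients mirrors the paper's own observation.
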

\begin{proof}
Let $m = \frac{n}{3}$. We start by expanding the tensor product in the definition of the quantum state in Equation \eqref{eqn:qracstate}:
\begin{equation}
\label{eqn:qracstatesum}
    \rho_{\text{QE}}(\bm{b}) = \sum_{\bm{P} \in \{\mathbb{I}, \mathsf{X}, \mathsf{Y}, \mathsf{Z}\}^{\otimes m}} \frac{\chi_{\bm{P}}(\bm{b})}{2^{m}3^{|\bm{P}|/2}}\bm{P},
\end{equation}
where $|\bm{P}|$ is the number of non-identity Pauli operators in the simple tensor $\bm{P}$ and  
\begin{align}
    \chi_{\bm{P}}(\bm{b}) := \prod_{i=1}^{m} (-1)^{\mathcal{B}_i^{(\bm{P}_i)}}.
\end{align}

Consider $\Phi : \{\mathbb{I}, \mathsf{X}, \mathsf{Y}, \mathsf{Z} \}^{\otimes m} \xrightarrow[]{} \mathcal{K}^{\text{QE}}_{m}$ defined as follows. For each $i \in [m]$, the bit triplet $(\Phi(\bm{P})_{3i-2},\Phi(\bm{P})_{3i -1},\Phi(\bm{P})_{3i})$ has a $1$ in the first, second, or third position if and only if $\bm{P}_i$ is $\mathsf{X}$, $\mathsf{Y}$, or $\mathsf{Z}$, respectively, and it is a triplet of zeroes otherwise.  We choose $\bm{O}$ in the following way:
\begin{equation}
\label{eqn:proof_observ}
  \bm{O} := \sum_{\bm{P} \in \{\mathbb{I}, \mathsf{X}, \mathsf{Y}, \mathsf{Z}\}^{\otimes m}} 2^{m}3^{|\bm{P}|/2}\widehat{g}(\Phi(\bm{P}))\bm{P}.
\end{equation}
Since $\Phi$ is a bijection by construction,  for each $\bm{s} \in \mathcal{K}^{\text{QE}}_{m}$ we can associate the Fourier basis element $\chi_{\bm{s}}$ with a $\chi_{\bm{P}}$ in Equation \eqref{eqn:qracstatesum}. It follows that
\begin{align}
    \label{eqn:qracfun}
 &\text{Tr}[\bm{O}\rho_{\text{QE}}(\bm{b})] =  \sum_{\bm{P} \in \{\mathbb{I}, \mathsf{X}, \mathsf{Y}, \mathsf{Z}\}^{\otimes m}} \frac{\chi_{\bm{P}}(\bm{b})}{2^{m}3^{|P|/2}}\Tr[\bm{O}\bm{P}]\nonumber \\&= \sum\limits_{\bm{s} \in \mathcal{K}^{\text{QE}}_{m}} \widehat{g}(\bm{s})\chi_{\bm{s}}(\bm{b}) = g(\bm{b}),
\end{align}
as required.
\end{proof}

Similar to Section \ref{sec:phase}, we define the Fourier coefficients of a variational linear quantum model that uses the QRAC embedding to be
\begin{equation}
    \label{eqn:qrac_fourier_coeffs}
    \widehat{f}_{\bm{\theta}}(\bm{s}) = \frac{\Tr[\bm{O}_{\bm{\theta}}\Phi^{-1}(\bm{s})]}{2^{m}3^{\lvert\Phi^{-1}(\bm{s})\rvert/2}}.
\end{equation}
As mentioned in Section \ref{sec:phase}, this definition will be used in Section \ref{sec:experiments} to see how well the variational model was able to fit the Fourier coefficients of the target function. Lastly, similar to the phase embedding case $\bm{O}$ can be diagonalized into $\bm{V}\bm{D}\bm{V}^{\dagger}$, where $\bm{D} \in \textup{span}_\mathbb{R}\{ \mathbb{I}, \mathsf{Z}\}^{\otimes m}$. We again leave as an open question if interesting classes of functions can be expressed when the spectrum of $\bm{O}_{\bm{\theta}}$ is fixed.

Before moving to our next result, we introduce another concept. With respect to an initial ordering of the input variables $(b_1, b_2, \dots, b_n) = \bm{b}$, we define the \emph{$\tau$-permuted model} to be 
\begin{equation}
\label{eqn:permuted_model}
\Tr[\bm{O}_{\bm{\theta}}\rho_{\text{QE}}(\tau(\bm{b}))],
\end{equation}
where $\tau$ is any element of the symmetric group on $n$ elements, $\mathcal{S}_{n}$. 
An element $\tau \in  \mathcal{S}_{n}$ acts on the tuples $\bm{b} \in \mathbb{B}^{n}$ by permuting the order of the entries, where $\tau(\bm{b})$ denotes the permuted tuple. The action of $\tau$ naturally extends to sets, and so it follows that $\tau$ maps $\mathcal{K}_{\frac{n}{3}}^{\text{QE}}$ to
\begin{equation}
    \tau(\mathcal{K}^{\operatorname{QE}}_{\frac{n}{3}}) = \{ \tau(\bm{s})~|~\forall \bm{s} \in \mathcal{K}^{\operatorname{QE}}_{\frac{n}{3}}\}.
\end{equation}
Thus, permuting the input bits expands the class of functions expressible by a single VQML model using the QRAC embedding.

Next, we present Theorem \ref{thm:permuted_spectra} that will be useful for extending the class of functions we can represent with the QRAC embedding and applies to $\tau$-permuted models.
\begin{theorem}
\label{thm:permuted_spectra}
For any $g \in \mathcal{G}$, if  $\exists\tau \in \mathcal{S}_n$ such that   $\forall \bm{s} \in \mathbb{B}^n$ the condition  $\widehat{g}(\bm{s}) \neq 0 \implies \bm{s} \in \tau(\mathcal{K}^{\operatorname{QE}}_{\frac{n}{3}})$ is satisfied, then there exists an observable $\bm{O}$  such that $\forall \bm{b} \in \mathbb{B}^{n}$ one has $\Tr[\bm{O}\rho_{\operatorname{QE}}(\tau(\bm{b}))] = g(\bm{b})$.
\end{theorem}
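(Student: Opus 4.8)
The plan is to reduce Theorem~\ref{thm:permuted_spectra} to Theorem~\ref{thm:qrac} by absorbing the permutation $\tau$ into a relabeling of the input variables. The key observation is that $\rho_{\operatorname{QE}}(\tau(\bm{b}))$ is, as a function of $\bm{b}$, exactly the QRAC feature state one gets by applying the \emph{original} QRAC embedding to a new function obtained from $g$ by permuting its argument. So first I would define the auxiliary function $g' := g \circ \tau^{-1}$, i.e. $g'(\bm{b}) := g(\tau^{-1}(\bm{b}))$ for all $\bm{b} \in \mathbb{B}^n$. Then I would compute the Fourier spectrum of $g'$ in terms of that of $g$: since $\chi_{\bm{s}}(\tau^{-1}(\bm{b})) = \chi_{\tau(\bm{s})}(\bm{b})$ (a permutation of coordinates acts the same way on the parity-defining set $\bm{s}$ and on the input $\bm{b}$, because $\bm{s}\cdot\tau^{-1}(\bm{b}) = \tau(\bm{s})\cdot\bm{b}$), the change of variables in \eqref{eqn:fourierseries} gives $\widehat{g'}(\bm{s}) = \widehat{g}(\tau^{-1}(\bm{s}))$. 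Hence the hypothesis $\widehat{g}(\bm{s}) \neq 0 \implies \bm{s} \in \tau(\mathcal{K}^{\operatorname{QE}}_{n/3})$ translates precisely into $\widehat{g'}(\bm{s}) \neq 0 \implies \bm{s} \in \mathcal{K}^{\operatorname{QE}}_{n/3}$, which is exactly the premise of Theorem~\ref{thm:qrac}.

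Next I would invoke Theorem~\ref{thm:qrac} applied to $g'$: there exists an observable $\bm{O}$ such that $\Tr[\bm{O}\rho_{\operatorname{QE}}(\bm{b})] = g'(\bm{b})$ for all $\bm{b} \in \mathbb{B}^n$. The same $\bm{O}$ will work for the $\tau$-permuted model: for any $\bm{b} \in \mathbb{B}^n$, substituting $\tau(\bm{b})$ for $\bm{b}$ yields $\Tr[\bm{O}\rho_{\operatorname{QE}}(\tau(\bm{b}))] = g'(\tau(\bm{b})) = g(\tau^{-1}(\tau(\bm{b}))) = g(\bm{b})$, which is the claim. This is essentially a one-line deduction once the Fourier bookkeeping is set up, so the main content is the spectral translation step.

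The only place where care is needed — and what I would consider the main obstacle, though it is a mild one — is getting the direction of the permutation action consistent throughout: whether $\tau$ acts on $\bm{s}$ the same way it acts on $\bm{b}$, and whether one needs $\tau$ or $\tau^{-1}$ at each stage. The clean way to avoid sign/direction errors is to verify the single identity $\bm{s}\cdot\tau^{-1}(\bm{b}) = \tau(\bm{s})\cdot\bm{b}$ directly from the definition of the coordinate permutation (both sides equal $\sum_i s_i b_{\tau(i)}$ up to reindexing), and then let everything else follow formally. One should also note that $\tau$ acts as a bijection on $\mathbb{B}^n$ and on subsets thereof, so $\tau(\mathcal{K}^{\operatorname{QE}}_{n/3})$ is well-defined with $|\tau(\mathcal{K}^{\operatorname{QE}}_{n/3})| = |\mathcal{K}^{\operatorname{QE}}_{n/3}|$, and that $g' \in \mathcal{G}$ since $\tau^{-1}$ is a bijection. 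No new quantum-circuit construction is required beyond what Theorem~\ref{thm:qrac} already supplies; the permutation is handled entirely by classical preprocessing of the input, consistent with the framing in Section~\ref{sec:embeddings}.
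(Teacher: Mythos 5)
Your strategy is, at bottom, the same as the paper's: the paper also rests on a permutation/dot-product identity and on recycling the observable construction from Theorem~\ref{thm:qrac} (it substitutes $\widehat{g}(\tau(\Phi(\bm{P})))$ for $\widehat{g}(\Phi(\bm{P}))$ in Equation~\eqref{eqn:proof_observ} rather than phrasing the step as a reduction through $g' := g\circ\tau^{-1}$, but these are two presentations of one argument), so no new idea is missing.

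There is, however, one internal inconsistency in your bookkeeping, located exactly where you said care is needed. Your identity $\bm{s}\cdot\tau^{-1}(\bm{b}) = \tau(\bm{s})\cdot\bm{b}$ is correct under any single fixed convention for how $\mathcal{S}_n$ acts on tuples (permutation matrices are orthogonal, so $\bm{s}\cdot\sigma(\bm{b}) = \sigma^{-1}(\bm{s})\cdot\bm{b}$ for every $\sigma$), and from it your spectral relation $\widehat{g'}(\bm{s}) = \widehat{g}(\tau^{-1}(\bm{s}))$, i.e. $\operatorname{supp}(\widehat{g'}) = \tau(\operatorname{supp}(\widehat{g}))$, follows correctly. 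But then the hypothesis $\operatorname{supp}(\widehat{g}) \subseteq \tau(\mathcal{K}^{\operatorname{QE}}_{n/3})$ yields $\operatorname{supp}(\widehat{g'}) \subseteq \tau^{2}(\mathcal{K}^{\operatorname{QE}}_{n/3})$, not $\mathcal{K}^{\operatorname{QE}}_{n/3}$, so the ``precise translation'' you assert does not follow from the identity you verified: what your argument actually establishes is the theorem with $\tau^{-1}(\mathcal{K}^{\operatorname{QE}}_{n/3})$ in the hypothesis (equivalently, with $\rho_{\operatorname{QE}}(\tau^{-1}(\bm{b}))$ in the conclusion). Note that your identity and the paper's Equation~\eqref{eqn:permutation_swap} differ by $\tau \leftrightarrow \tau^{-1}$ and cannot both hold under one consistent convention unless $\tau^{2}$ is the identity; the paper implicitly works in the convention where its Equation~\eqref{eqn:permutation_swap} holds, and under that convention your translation step would indeed be exact, with $\widehat{g'}(\bm{s}) = \widehat{g}(\tau(\bm{s}))$. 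The slip is harmless in substance---$\tau \mapsto \tau^{-1}$ is a bijection of $\mathcal{S}_n$, so the two formulations are interchangeable under the existential quantifier, and Theorem~\ref{thm:qrac_ensemb} sums over all of $\mathcal{S}_n$ anyway---but to make your write-up airtight you should fix one convention and then either state the conclusion with $\tau^{-1}$ or restate the hypothesis accordingly, after which the reduction to Theorem~\ref{thm:qrac} goes through verbatim.
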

\begin{proof}
The main argument is based on the simple fact that
\begin{equation}
\label{eqn:permutation_swap}
\forall \bm{s}, \bm{b} \in \mathbb{B}^n, \bm{s} \cdot \tau(\bm{b}) = \tau(\bm{s}) \cdot \bm{b}.
\end{equation} Let $\Phi$ be as defined in the proof of Theorem \ref{thm:qrac}, and suppose for $\bm{P} \in \{\mathbb{I}, \mathsf{X}, \mathsf{Y}, \mathsf{Z} \}^{\otimes m}$, $\bm{s} = \Phi(\bm{P})$ and $\bm{P}'$ is such that $\Phi(\bm{P}') = \tau(\Phi(\bm{P}))$. This $\bm{P}'$ exists because $\Phi$ is bijective. Then, using Equation \eqref{eqn:permutation_swap}, it follows that
\begin{equation}
\chi_{\bm{P}}(\tau(\bm{b})) = \chi_{\bm{P}'}(\bm{b}) = \chi_{\tau(\bm{s})}(\bm{b}).
\end{equation} If we replace $\widehat{g}(\Phi(\bm{P}))\bm{P}$ with $\widehat{g}(\tau(\Phi(\bm{P})))\bm{P}$ in Equation \eqref{eqn:proof_observ}, then the rest follows by using the same arguments made when proving Theorem \ref{thm:qrac}.
\end{proof}

The class of functions that can be represented by ensembles of models, Equation \eqref{eqn:ensemble}, using the QRAC embedding is summarized in the following result. The proof makes use of techniques that are similar to those used in proving Theorem \ref{thm:phase_esemb}.

\begin{theorem}
\label{thm:qrac_ensemb}
For any $g \in \mathcal{G}$ with degree $d \leq \lceil\frac{n}{3}\rceil$, there exists a collection of observables $\{\bm{O}^{(\tau)}\}_{\tau \in \mathcal{S}_n}$ such that $\forall \bm{b} \in \mathbb{B}^n$ one has $\sum_{\tau \in \mathcal{S}_n}\Tr[\bm{O}^{(\tau)}\rho_{\operatorname{QE}}(\tau(\bm{b}))] = g(\bm{b})$.
\end{theorem}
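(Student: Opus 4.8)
The plan is to mirror the proof of Theorem~\ref{thm:phase_esemb}, with the elements $\tau \in \mathcal{S}_n$ playing the role that the subset-selecting maps $\nu_{\bm{w}}$ played there, and with Theorem~\ref{thm:permuted_spectra} supplying the single-model building block that Theorem~\ref{thm:phase} supplied in the phase-embedding case. Throughout I keep the standing assumption of Section~\ref{sec:qrac} that $n \equiv 0 \bmod 3$, so $m := n/3 = \lceil n/3 \rceil$.

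The first step is to record the combinatorial fact that drives the decomposition. For $\bm{s} \in \mathbb{B}^n$ set
\[
k_{\bm{s}} := \bigl|\{\tau \in \mathcal{S}_n \; : \; \bm{s} \in \tau(\mathcal{K}^{\operatorname{QE}}_{m})\}\bigr|
           = \bigl|\{\tau \in \mathcal{S}_n \; : \; \tau^{-1}(\bm{s}) \in \mathcal{K}^{\operatorname{QE}}_{m}\}\bigr| .
\]
Since every element of $\mathcal{K}^{\operatorname{QE}}_{m}$ has Hamming weight at most $m$, $k_{\bm{s}} = 0$ when $\text{wt}(\bm{s}) > m$; conversely, if $\text{wt}(\bm{s}) \le m$ one can always permute the (at most $m$) nonzero coordinates of $\bm{s}$ into distinct triplets, so $k_{\bm{s}} \ge 1$, and in fact a direct count gives $k_{\bm{s}} = \binom{m}{w}\, 3^{w}\, w!\,(n-w)!$ with $w = \text{wt}(\bm{s})$ (choose the $w$ occupied triplets, the occupied slot within each, and the two bijections on the nonzero and the zero coordinates). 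In particular $k_{\bm{s}}$ depends only on $\text{wt}(\bm{s})$, exactly as in Theorem~\ref{thm:phase_esemb}. Because $\deg g \le d \le m$, every $\bm{s}$ with $\widehat{g}(\bm{s}) \neq 0$ obeys $\text{wt}(\bm{s}) \le m$, so $k_{\bm{s}}$ is a well-defined positive integer on the Fourier support of $g$.

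Next, for each $\tau \in \mathcal{S}_n$ define the component function $g^{(\tau)} : \mathbb{B}^n \to \mathbb{R}$ by
\[
g^{(\tau)}(\bm{b}) := \sum_{\bm{s} \in \tau(\mathcal{K}^{\operatorname{QE}}_{m})} \frac{\widehat{g}(\bm{s})}{k_{\bm{s}}}\, \chi_{\bm{s}}(\bm{b}),
\]
where summands with $\widehat{g}(\bm{s}) = 0$ vanish and hence cause no division issue. By construction $\widehat{g^{(\tau)}}(\bm{s}) \neq 0 \implies \bm{s} \in \tau(\mathcal{K}^{\operatorname{QE}}_{m})$, so Theorem~\ref{thm:permuted_spectra} applies to $g^{(\tau)}$ and produces an observable $\bm{O}^{(\tau)}$ with $\Tr[\bm{O}^{(\tau)}\rho_{\operatorname{QE}}(\tau(\bm{b}))] = g^{(\tau)}(\bm{b})$ for all $\bm{b} \in \mathbb{B}^n$. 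Summing over $\tau$ and interchanging the order of summation, the term $\frac{\widehat{g}(\bm{s})}{k_{\bm{s}}}\chi_{\bm{s}}$ appears once for every $\tau$ with $\bm{s} \in \tau(\mathcal{K}^{\operatorname{QE}}_{m})$, i.e. exactly $k_{\bm{s}}$ times, so
\[
\sum_{\tau \in \mathcal{S}_n} \Tr[\bm{O}^{(\tau)}\rho_{\operatorname{QE}}(\tau(\bm{b}))]
 = \sum_{\tau \in \mathcal{S}_n} g^{(\tau)}(\bm{b})
 = \sum_{\bm{s} \in \mathbb{B}^n} \widehat{g}(\bm{s})\,\chi_{\bm{s}}(\bm{b}) = g(\bm{b}),
\]
which is the assertion.

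The only nontrivial point is the combinatorial bookkeeping in the second step: one must verify that $k_{\bm{s}}$ is strictly positive precisely on the strings of weight at most $m$ — which is exactly why the hypothesis $d \le \lceil n/3 \rceil$ is the sharp one here — and that the multiplicity with which a given $\chi_{\bm{s}}$ reappears after summing the $g^{(\tau)}$ equals this same $k_{\bm{s}}$, so that the weighted decomposition telescopes back to $g$. Everything else is a direct transcription of the argument for Theorem~\ref{thm:phase_esemb}, now invoking Theorem~\ref{thm:permuted_spectra} in place of Theorem~\ref{thm:phase}.
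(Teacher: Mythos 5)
Your proposal is correct and follows essentially the same route as the paper's proof: the same decomposition of $g$ into components $g^{(\tau)} = \sum_{\bm{s} \in \tau(\mathcal{K}^{\operatorname{QE}}_{m})} \frac{\widehat{g}(\bm{s})}{k_{\bm{s}}}\chi_{\bm{s}}$, the same invocation of Theorem \ref{thm:permuted_spectra} per permutation, and the same multiplicity argument that each term reappears exactly $k_{\bm{s}}$ times in the ensemble sum. Your explicit count $k_{\bm{s}} = \binom{m}{w}3^{w}w!(n-w)!$ is a correct refinement the paper does not state (it only needs $k_{\bm{s}} \geq 1$, via the union of the $\tau(\mathcal{K}^{\operatorname{QE}}_{m})$ covering all strings of weight at most $m$), but it does not change the structure of the argument.
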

\begin{proof}
Let $m=\frac{n}{3}$, by hypothesis, $g$ satisfies:
\begin{equation}
    \widehat{g}(\bm{s}) \neq 0 \implies \bm{s} \in \{ \bm{b} \in \mathbb{B}^{3m}~|~\text{wt}(\bm{b})\leq m\}.
\end{equation}
For any $\bm{s}$ such that $\text{wt}(\bm{s}) \leq m$, let $k_{\bm{s}}$ be the number of $\tau \in \mathcal{S}_{n}$ such that $\bm{s} \in \tau(\mathcal{K}_{m}^{\text{QE}})$. It can be easily seen that $k_{\bm{s}} \neq 0$ for all such $\bm{s}$ because
\begin{equation}
     \bigcup_{\tau \in \mathcal{S}_{n}} \tau(\mathcal{K}^{\text{QE}}_{m}) = \{ \bm{b} \in \mathbb{B}^{3m}~|~\text{wt}(\bm{b})\leq m\}.
\end{equation}
It is possible that for two different $\tau, \lambda \in \mathcal{S}_{n}$
\begin{align}
    (\tau(\mathcal{K}^{\text{QE}}_{m}) \cap \lambda(\mathcal{K}^{\text{QE}}_{m}))\setminus\{\bm{0}\} \neq \varnothing,
\end{align}
where $\bm{0}$ is the $n$-tuple with all zero entries.
One reason is that $\mathcal{K}^{\text{QE}}_{m}$ has a nontrivial stabilizer group under the action of $\mathcal{S}_{n}$. For example, a permutation that just changes the order of $s_{3i-2},s_{3i-1},s_{3i}$ for some $i \in [m]$ and all $\bm{s} \in \mathcal{K}^{\text{QE}}_{m}$ is a nontrivial stabilizer. Thus multiple permuted models can effectively be identical, i.e. $\mathcal{K}^{\text{QE}}_{m}$ can equal $\tau(\mathcal{K}^{\text{QE}}_{m})$. However, the proof still works if we do not exclude such cases. 

Next, for every $\tau \in \mathcal{S}_n$, we define a new function $g^{(\tau)}:\mathbb{B}^{n} \xrightarrow[]{} \mathbb{R}$ as follows:
\begin{equation}
    g^{(\tau)}(\bm{b}) := \sum_{\bm{s} \in \tau(\mathcal{K}_{m}^{\text{QE}})} \frac{\widehat{g}(\bm{s})}{k_{\bm{s}}}\chi_{\bm{s}}(\bm{b}).
\end{equation}
By invoking Theorem \ref{thm:permuted_spectra}, for each $\tau \in \mathcal{S}_n$, there exists an observable $\bm{O}^{(\tau)}$ such that $\forall \bm{b} \in \mathbb{B}^n$ one has that $\Tr[\bm{O}^{(\tau)}\rho_{\operatorname{QE}}(\tau(\bm{b}))] = g^{(\tau)}(\bm{b})$. Thus, we will make use of the following ensemble:
\begin{equation}
    \label{eqn:sumcircuits}
    \sum_{\tau \in \mathcal{S}_{n}} \Tr[\bm{O}^{(\tau)}\rho_{\text{QE}}(\tau(\bm{b}))] = \sum_{\tau \in \mathcal{S}_n}g^{(\tau)}(\bm{b}).
\end{equation}
Since each $\frac{\widehat{g}(\bm{s})}{k_{\bm{s}}}\chi_{\bm{s}}(\bm{b})$ appears $k_{\bm{s}}$ times in the sum in Equation \eqref{eqn:sumcircuits} the result follows.
\end{proof}

Similar to the ensemble of phase-embedding-based models, we can utilize a validation data set to determine if the size of the ensemble is sufficient. In addition, we note that a model that makes use of $\bm{U}_{\text{QE}}$ may be less susceptible to overfitting due to higher-order Fourier basis elements not being accessible. We note that Theorem \ref{thm:phase_esemb} implies that for any function with degree $d \leq \frac{n}{3}$, there exists an ensemble of $\binom{n}{\nicefrac{n}{3}}$  phase embedding-based models using $\frac{n}{3}$ qubits that can express the function. Since of course $n!$, i.e. the cardinality of $\mathcal{S}_{n}$, is larger than $\binom{n}{\nicefrac{n}{3}}$, an ensemble of phase embedding models would be more desirable in this case. However, both sufficient conditions still require factorially many models, which can become intractable. We leave as an open question if a smaller ensemble of QRAC-based models is sufficient for expressing interesting functions with $d \leq \frac{n}{3}$ 

We note that a single QRAC-embedding based model still has some beneficial properties.
For example, a single phase-embedding-based model using $m < n$ qubits can only contain Fourier terms that involve $m$ out of the $n$ input variables, i.e. is an $m$-junta. However, a single QRAC-based model can express functions that are dependent on every input variable. 

Lastly, a single linear quantum model using multiple consecutive QRAC embeddings can express a larger class of functions than what was mentioned in Theorem \ref{thm:qrac}. Consider replacing $\bm{U}_{\text{QE}}$ with 
\begin{eqnarray}
\label{eqn:qracrepeat} 
    \prod_{k=1}^{r}\bm{V}_k\bm{U}_{\text{QE}}(\bm{b}),
\end{eqnarray}
where the $\bm{V}_k$ are arbitrary unitary operators that are may or may not be trainable, and there are $r$ data-encoding steps. In Appendix \ref{sec:qrac_improving_through_rep}, we present a concrete example of the unitary operators in Equation \eqref{eqn:qracrepeat} that produces a linear quantum model on a single qubit whose output is expressible as a nontrivial linear combination of all Fourier basis elements for $\mathbb{B}^3$. This alternative operator, in the case where $\bm{V}_{k}$ are not trainable, could be used in place of $\bm{U}_{3,1}$ in Equation \eqref{eqn:qrac_embedding_unitary} in the multiqubit case. However, the degree of freedom that the trainable part of the model, $\bm{O}_{\bm{\theta}}$, has in choosing the coefficients of the $\chi_{\bm{P}}$ is limited when compared to the ensemble approach.

\section{Experiments}
\label{sec:experiments}
We present some experiments, in simulation and on hardware, to demonstrate scenarios in which it is possible to use the phase/QRAC embeddings in a variational linear quantum model to fit low-degree functions on the Boolean cube. All experiments were performed utilizing the Qiskit \cite{qiskit} machine learning framework. The code for executing the experiments in simulation is available online at \href{https://doi.org/10.5281/zenodo.7805753}{https://doi.org/10.5281/zenodo.7805753}. The goal is to show the expressivity of the models, i.e. demonstrating the theory in action, rather than assessing their ability to generalize to unseen data. Thus we provide the models access to all of the data to train on. More explicitly the training set is $\mathcal{T} = \{ (\bm{b}, g(\bm{b}))~|~\forall \bm{b} \in \mathbb{B}^n\}$ for fitting the target function $g: \mathbb{B}^n \xrightarrow[]{} \mathbb{R}$. As discussed in Section \ref{sec:quantumcircuitlearning}, the goal of such a supervised learning task is to minimize Equation \eqref{eqn:regularized_risk}. The loss function utilized for each experiment below is the square error defined as
\begin{equation}
    \ell(\bm{b}, g(\bm{b}), f_{\bm{\theta}}) = \frac{1}{2}(g(\bm{b}) - f_{\bm{\theta}}(\bm{b}))^2,
\end{equation}
where $f_{\bm{\theta}}$ is the model and $g$ is the target function. We did not utilize regularization in any experiment, and thus the regularization term, in Equation \eqref{eqn:regularized_risk}, is zero.  In the QRAC embedding case, for simplicity, we only make use of a single linear quantum model instead of an ensemble. Employing the notation from the previous sections, for all experiments, $\bm{W}(\bm{\theta})$ is an $m$-qubit PQC consisting of single-qubit rotation gates, $\bm{R}_{\mathsf{Y}}$ and $\bm{R}_{\mathsf{Z}}$, and two-qubit controlled-$\mathsf{Z}$ gates using nearest-neighbor connections. Lastly, $\bm{O}_{\bm{\theta}}=\bm{W}^{\dagger}(\bm{\theta})\mathsf{Z}^{\otimes m}\bm{W}(\bm{\theta})$. Here we have chosen $\bm{D}$ from Section \ref{sec:quantumcircuitlearning} to be $\mathsf{Z}^{\otimes m}$. Such a selection of $\bm{D}$ happens to be sufficient for the functions we consider in our experiments. As mentioned in Section \ref{sec:phase} this is not sufficient in general for either embedding. The functions were chosen this way so that the number of circuit runs on hardware could be reduced. The goal is to find a parameter setting for $\bm{\theta}$ such that $\bm{O}_{\bm{\theta}}$ implements an observable $\bm{O}^{(g)}$ satisfying the property:
\begin{align}
    \forall \bm{b} \in \mathbb{B}^n: \Tr[\bm{O}^{(g)}\rho(\bm{b})] = g(\bm{b}).
\end{align}

For each simulated and experimental result we display the functions' values for different Boolean inputs as well as the Fourier coefficients of the learned quantum model in order to show the alignment between the predicted values and the experimental results. For both embeddings, using the final values of the parameters obtained at the end of training, we classically computed the matrix for $\bm{W}(\bm{\theta})$, which corresponds to the trainable part of our model. Subsequently, we computed the matrix for $\bm{W}^{\dagger}(\bm{\theta})\mathsf{Z}^{\otimes m}\bm{W}(\bm{\theta})$, which equals $\bm{O}_{\bm{\theta}}$.
The Fourier coefficients of the linear quantum models were  computed using Equations \eqref{eqn:phase_fourier_coeffs} and \eqref{eqn:qrac_fourier_coeffs} and the matrix for $\bm{O}_{\bm{\theta}}$. Because the number of circuits ($2~\times $ number of parameters $\times $ size of training set $\times$ number of iterations) scales quickly for implementing optimization with the parameter-shift rule, we utilized the COBYLA \cite{powell1994direct} optimizer instead of standard parameter-shift rules and minibatch learning for both simulation and hardware experiments. The figures that follow later clearly show both embeddings were able to fit the target function of $n$ bits, with the QRAC embedding using only one-third of the qubits compared to the $n$-qubit phase embedding.

In Figure \ref{fig:3_bit_results} we show  experiments utilizing both the phase and QRAC embeddings to fit the function
\begin{equation} 
    \label{definitionf3}
    g_3(\bm{b}) = a_1(-1)^{b_1} + a_2(-1)^{b_2} + a_3(-1)^{b_3}.
\end{equation}
This functional form was chosen because, as shown in Section \ref{sec:qrac}, a single variational linear quantum model using QRAC without permuting the input can represent at most a degree $1$ function using a single qubit. The values of the coefficients, $a_i$, were chosen so that setting $\bm{D} = \mathsf{Z}^{\otimes m}$ would be sufficient to express $g_3$.

Three qubits were used in the phase embedding case and one qubit was used in the QRAC embedding case. The circuits that we used are displayed in  Figure \ref{fig:circuits_3bit}. For the hardware experiments, we applied readout-error mitigation and dynamic decoupling \cite{viola1998dynamical} implemented within Qiskit.
\begin{figure*}[!htb]
	\centering
	\includegraphics[scale=1]{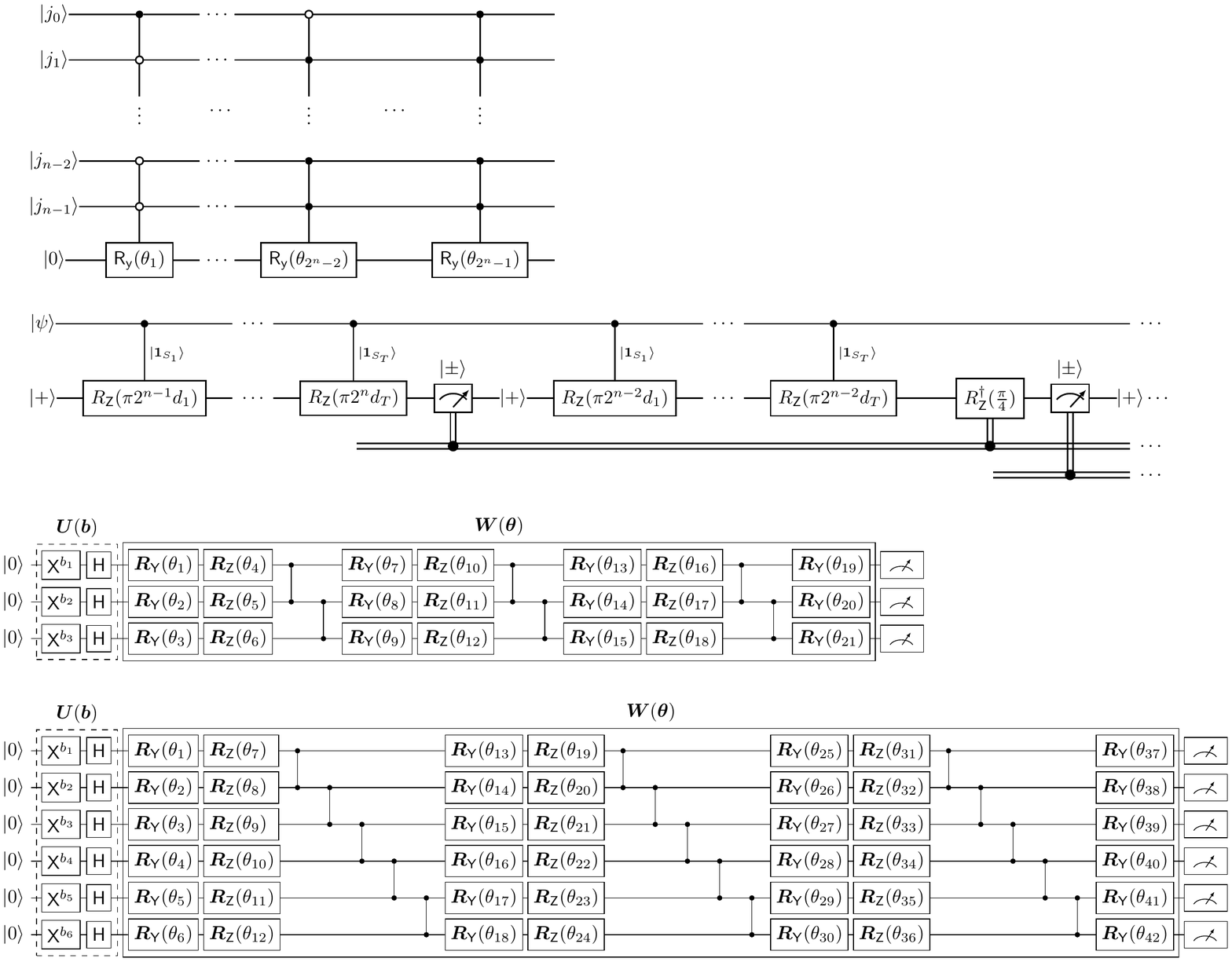}
	\includegraphics[scale=1]{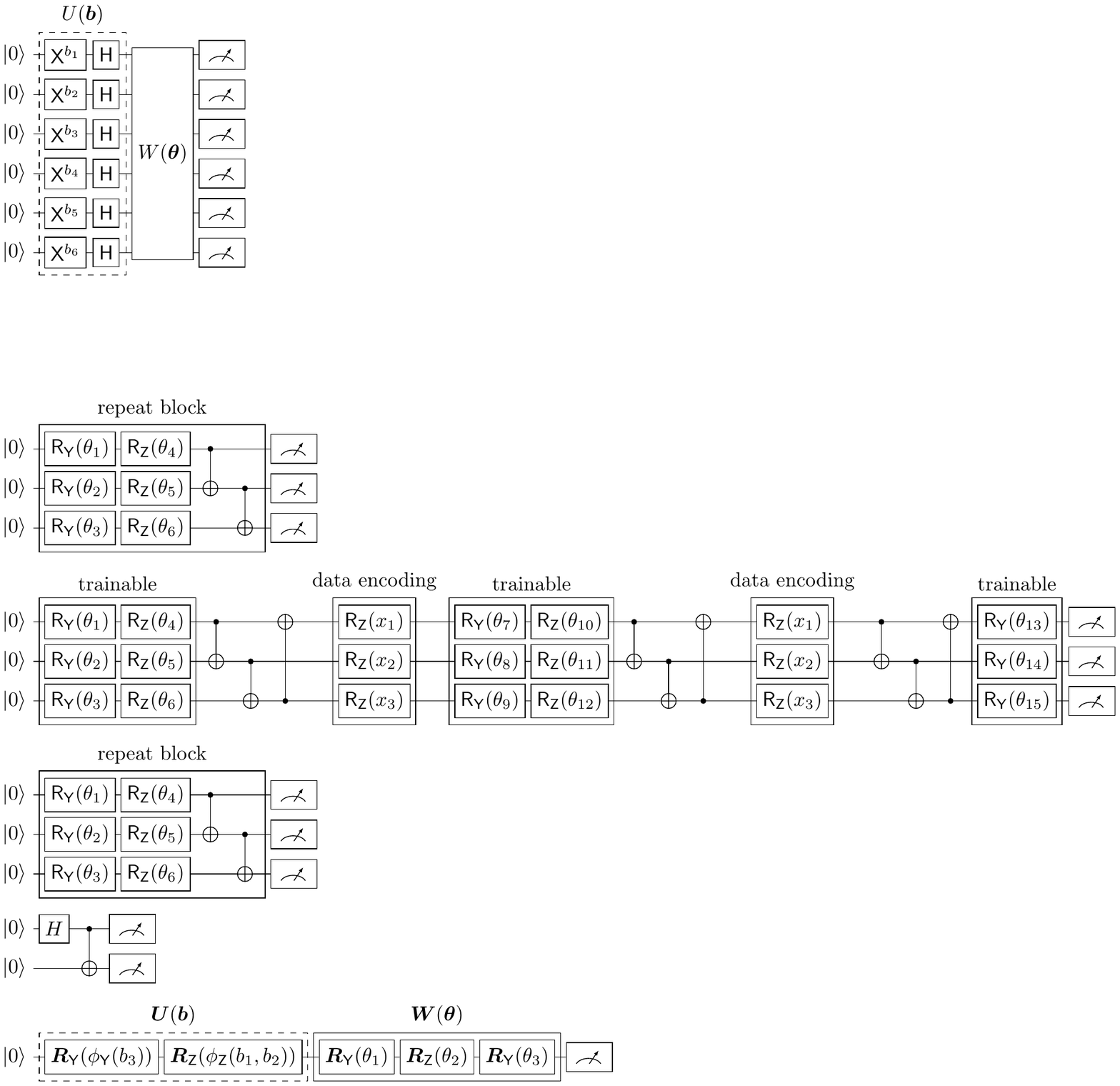}
	\caption{Circuits used in experiments to fit real-valued functions on $\mathbb{B}^3$: on the top is the phase embedding circuit and on the bottom is the QRAC embedding circuit.}
\label{fig:circuits_3bit}
\end{figure*}
Simulation was performed utilizing the statevector simulator. The hardware experiments were performed on the $16$-qubit \textit{ibmq\_guadalupe} device. The phase embedding circuit used qubits $5$, $8$ and $9$ and $300$ iterations of the COBYLA optimizer, and the QRAC embedding circuit used qubit $8$ and $150$ iterations of the COBYLA optimizer. We executed $10,000$ shots for each experiment so that readout-error mitigation could be applied.
\begin{figure*}[!htb]
\centering

\subfloat[Phase Embedding]{
    \includegraphics[width=.48\linewidth]{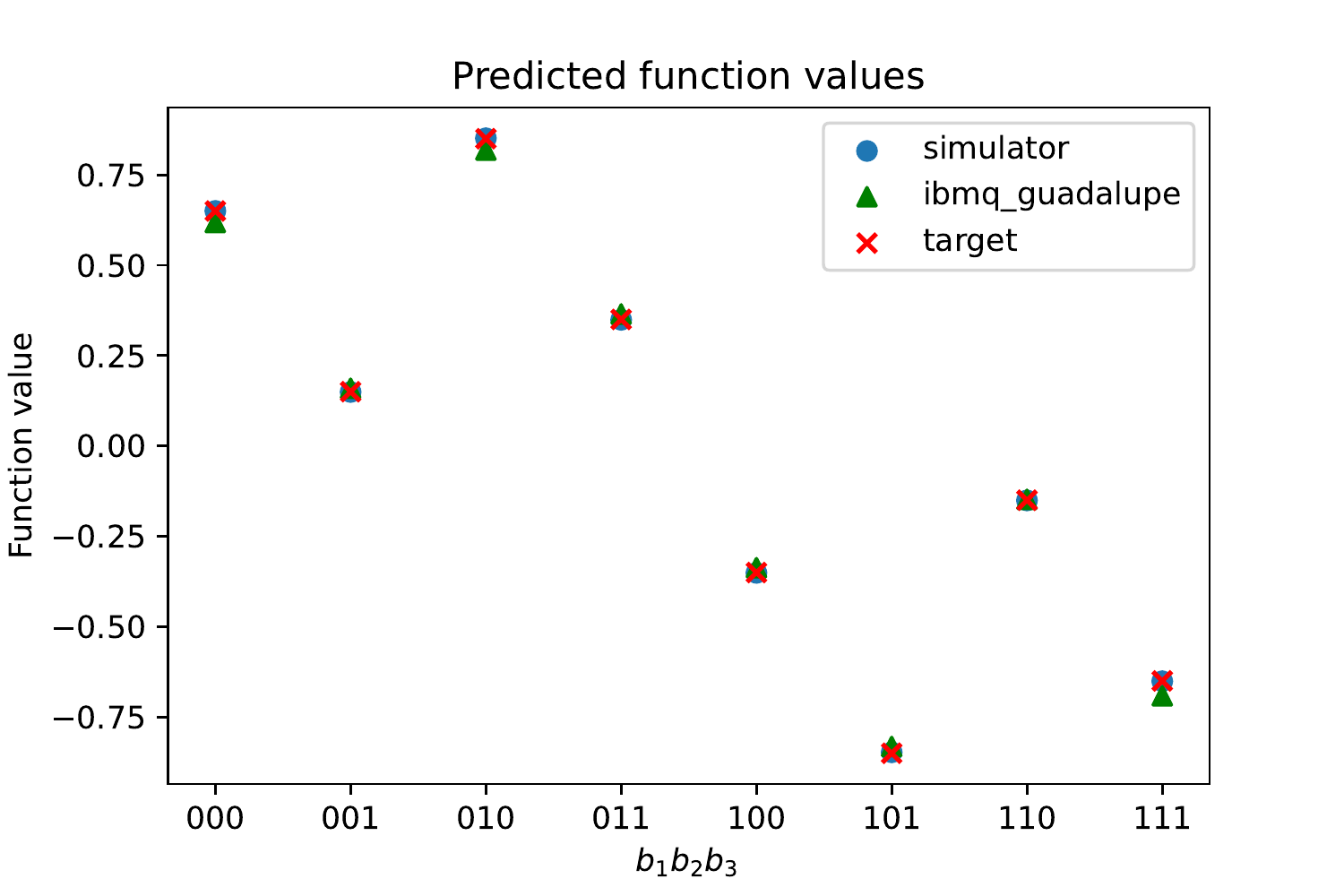}  
    \includegraphics[width=.48\linewidth]{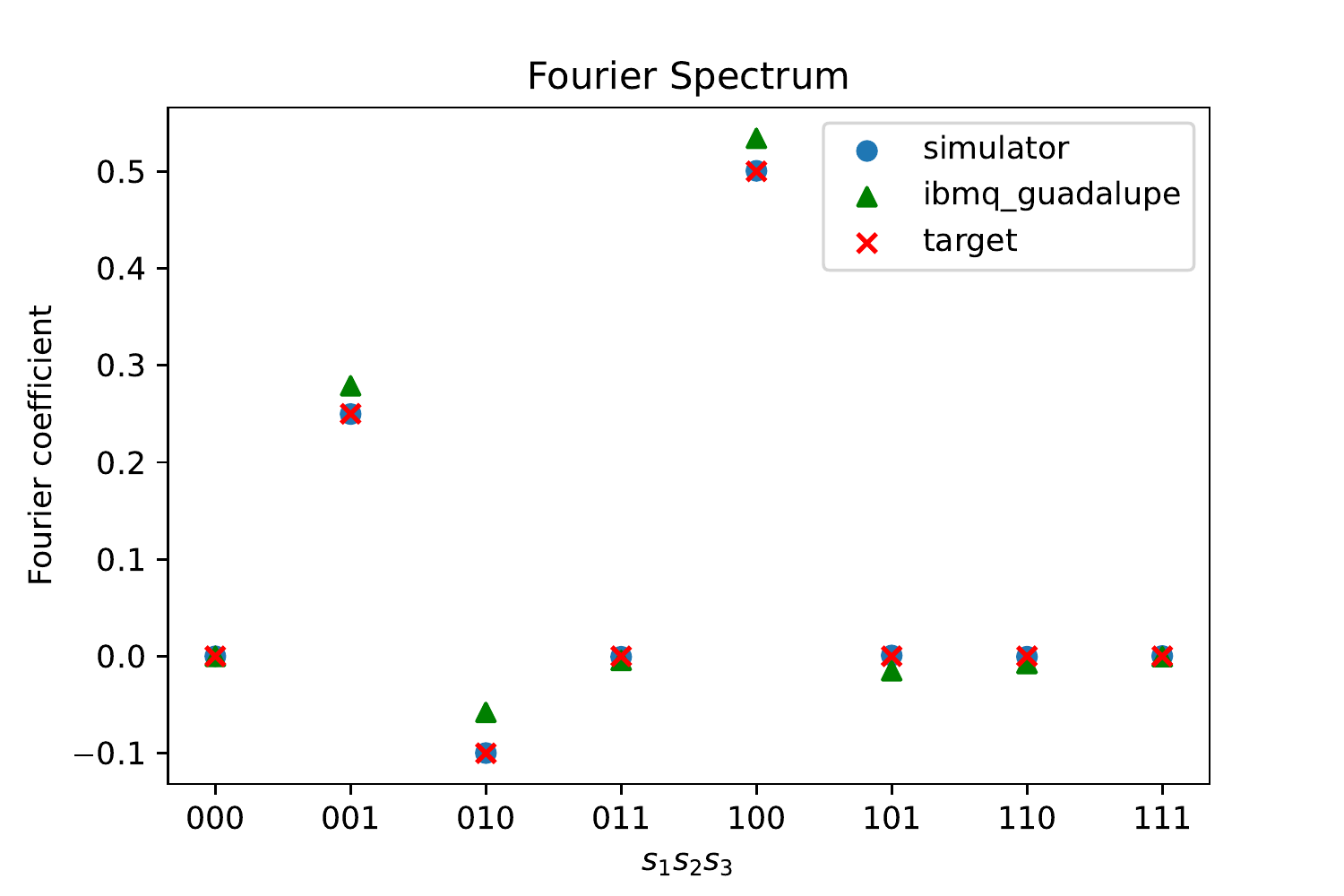}
}\\

\subfloat[QRAC Embedding]{
    \includegraphics[width=.48\linewidth]{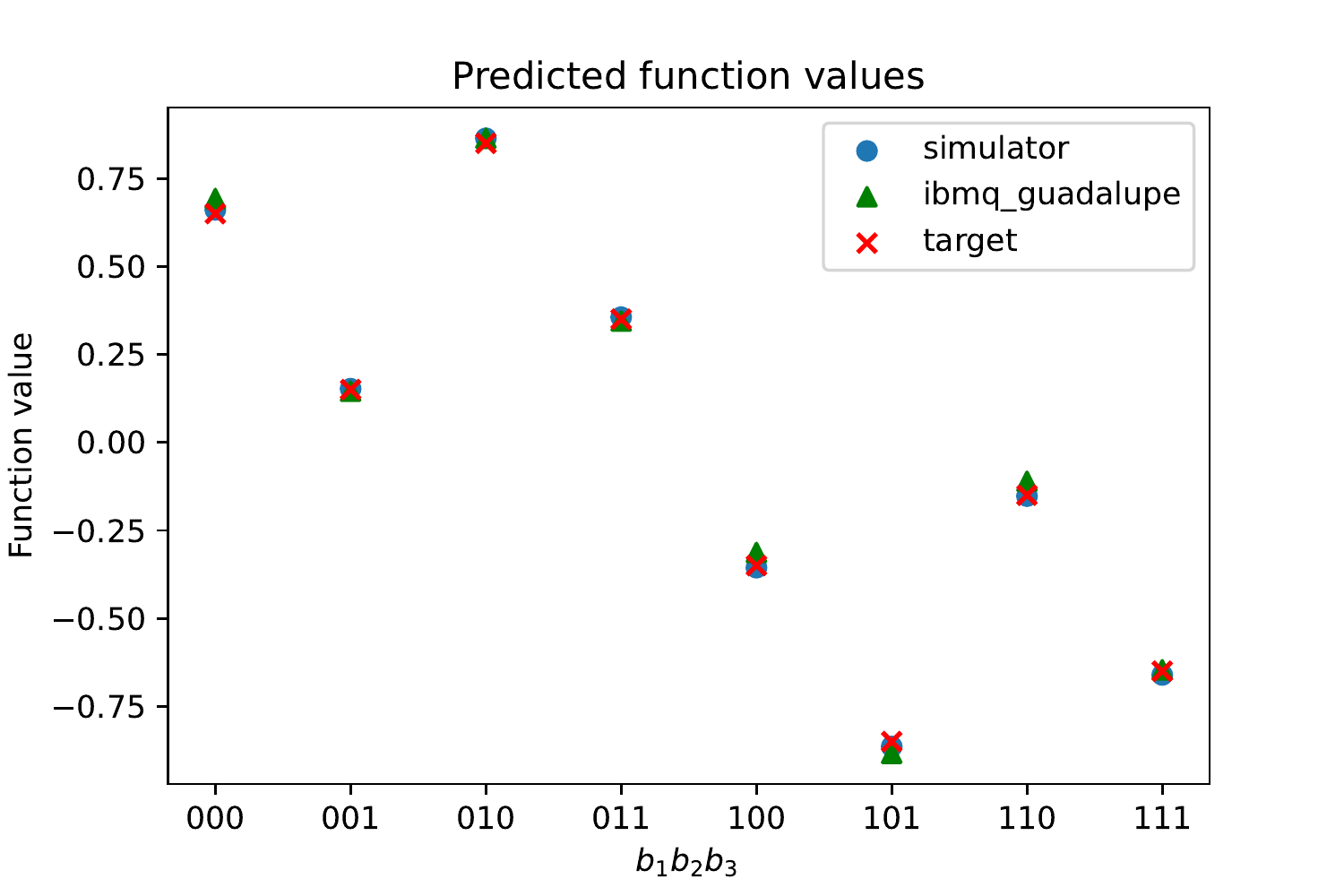}
     \includegraphics[width=.48\linewidth]{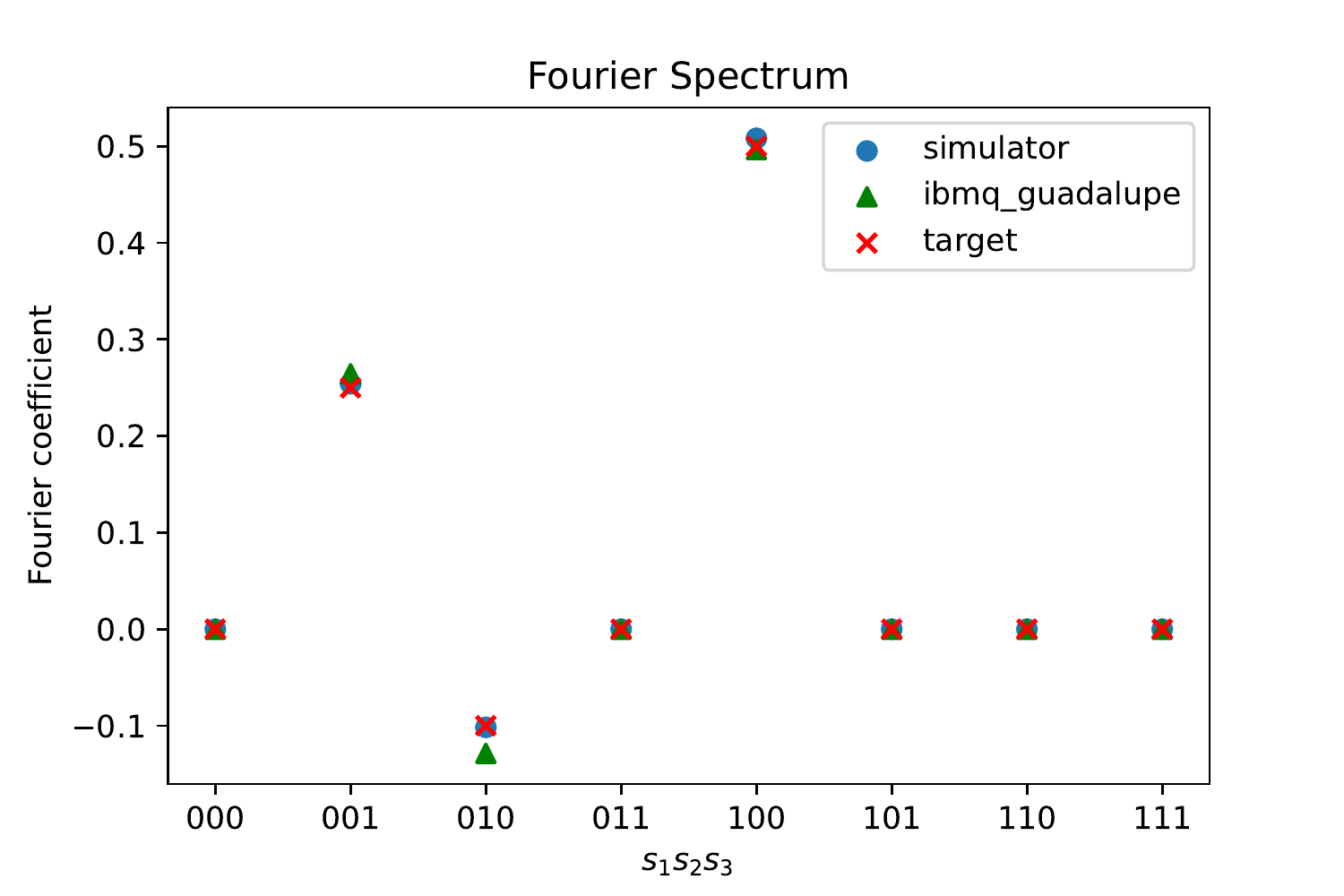}
}

\caption{Simulator and experimental results obtained from  using (a) the phase embedding with three qubits and (b) the QRAC embedding with one qubit to fit the function $g_3$ with $a_1=\frac{1}{2}, a_2=-\frac{1}{10}, a_3 = \frac{1}{4}$. ``Target'' represents the exact outputs and Fourier spectrum of  $g_3$. Both methods successfully fit the target function with high accuracy.}
\label{fig:3_bit_results}
\end{figure*}

Similar to the experiments shown above for the function $g_3(\bm{b})$ with 3-bit inputs, in Figure \ref{fig:6_bit_results}, we present experimental results for learning the following function that depends on 6-bit inputs:
\begin{align} \label{definitionf6} 
    g_6(\bm{b}) = d_1(-1)^{b_1 + b_4} + d_2(-1)^{b_1 + b_5}\nonumber\\+ d_3(-1)^{b_2 + b_4} + d_4(-1)^{b_2 + b_5}.
\end{align}
The functional form of $g_6$ was chosen for similar reason that $g_3$ was chosen in the previous experiment. A variational linear quantum model using QRAC on two qubits  without permuting the input can express at most degree $2$ functions. The coefficients were again chosen so that $\mathsf{Z}^{\otimes m}$ would be sufficient as an observable.
The circuits used are presented in Figure \ref{fig:circuits_6bit}. Here six qubits were used for the phase embedding case while two qubits were used for the QRAC embedding case. Simulation was performed utilizing the statevector simulator. The hardware experiment for the QRAC embedding case was performed on the $7$-qubit \textit{ibmq\_casablanca} device. The circuit used qubits $1$ and $2$ and $200$ iterations of the COBYLA optimizer. Again, $10,000$ shots were executed for each experiment so that readout-error mitigation could be applied.
In this experiment we again observed close agreements between the predictions based on the theory and the experimental results.

\begin{figure*}[!htb]
	\centering
	\includegraphics[scale=0.79]{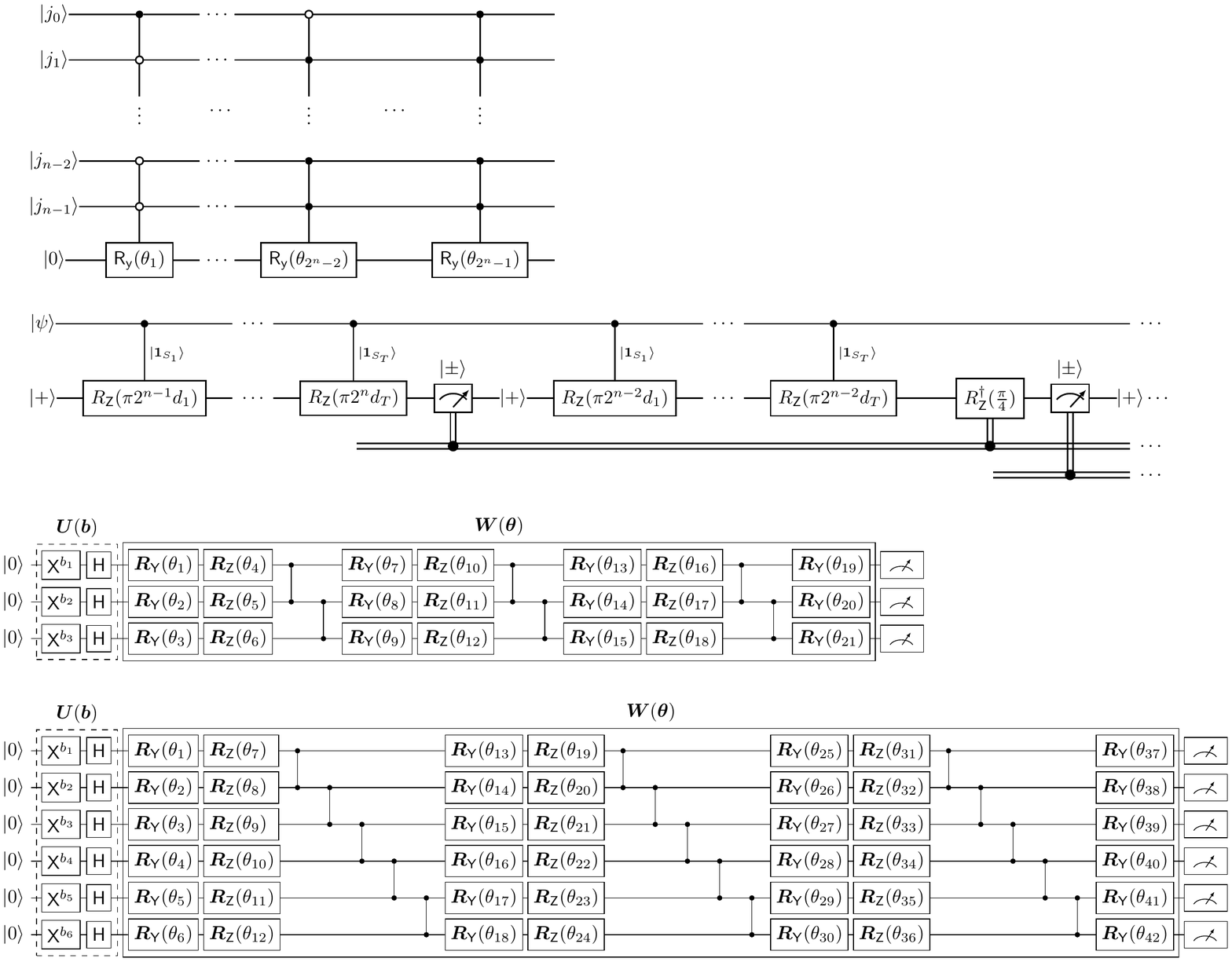}
	\includegraphics[scale=0.79]{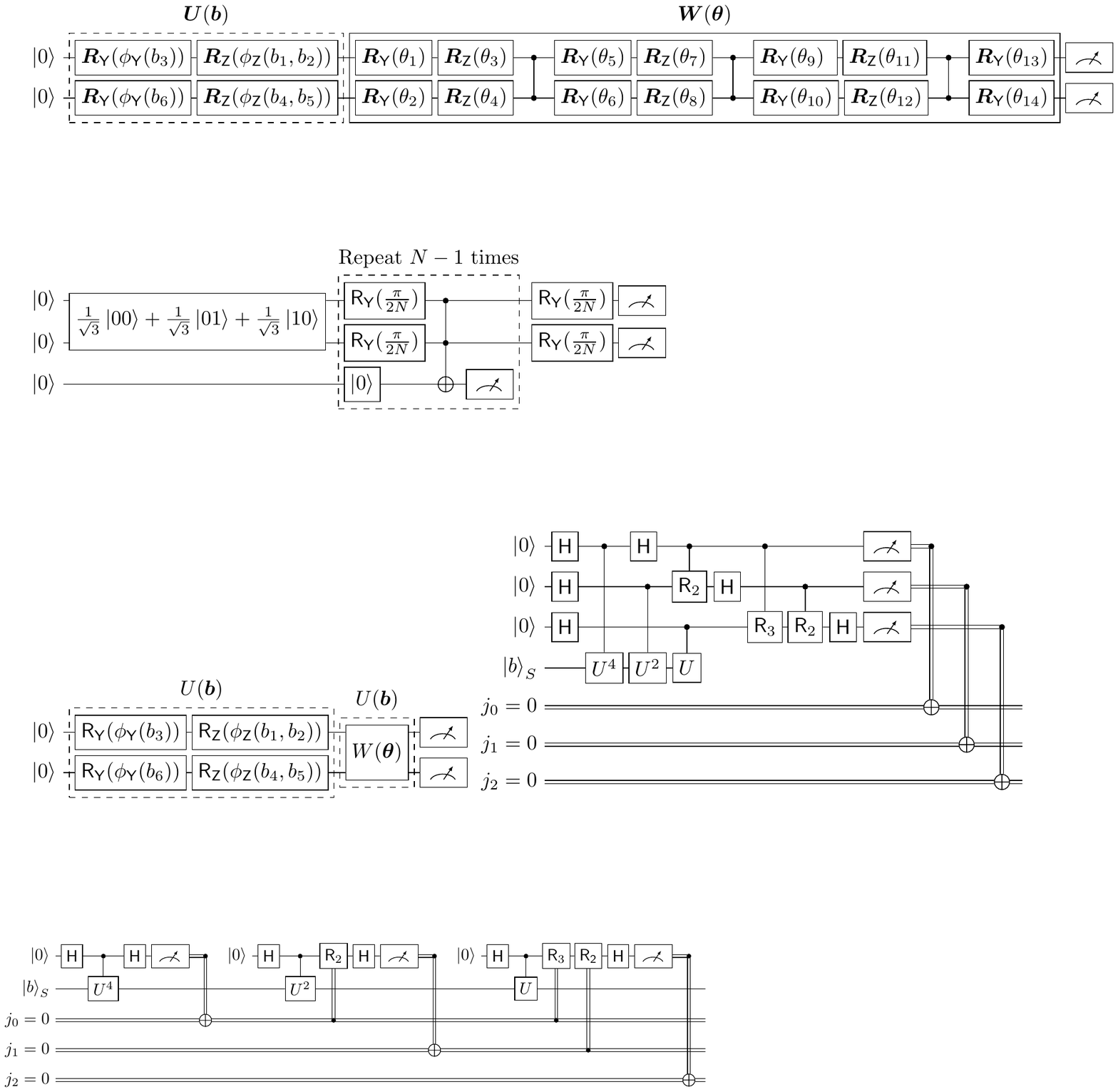}
	\caption{Circuits used in experiments to fit real-valued functions on $\mathbb{B}^6$: on the top is the phase embedding circuit and on the bottom is the QRAC embedding circuit.}
    \label{fig:circuits_6bit}
\end{figure*}

\begin{figure*}[!htb]
\centering
 \subfloat[Phase Embedding]{
	\includegraphics[scale=0.5]{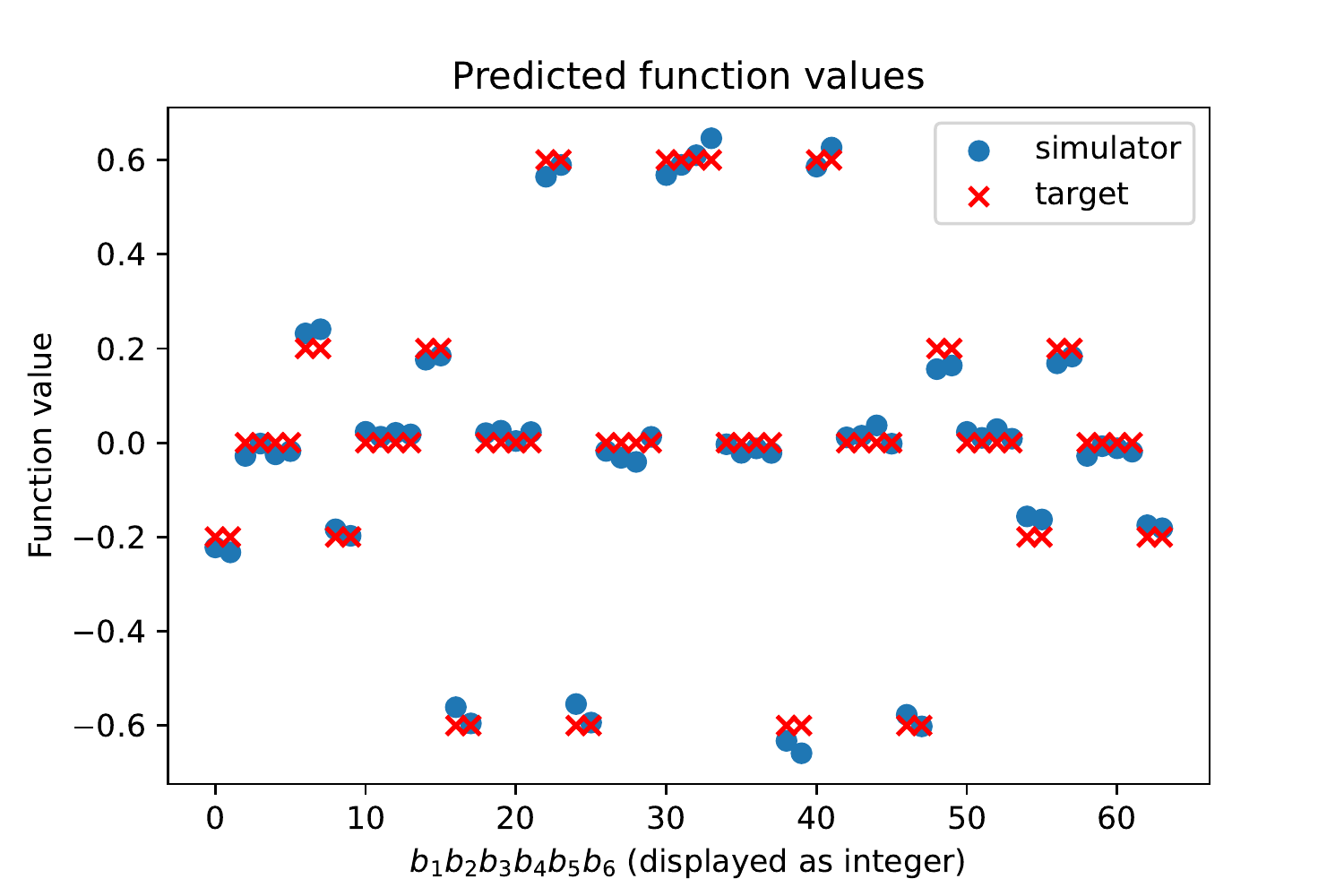}
	\includegraphics[scale=0.5]{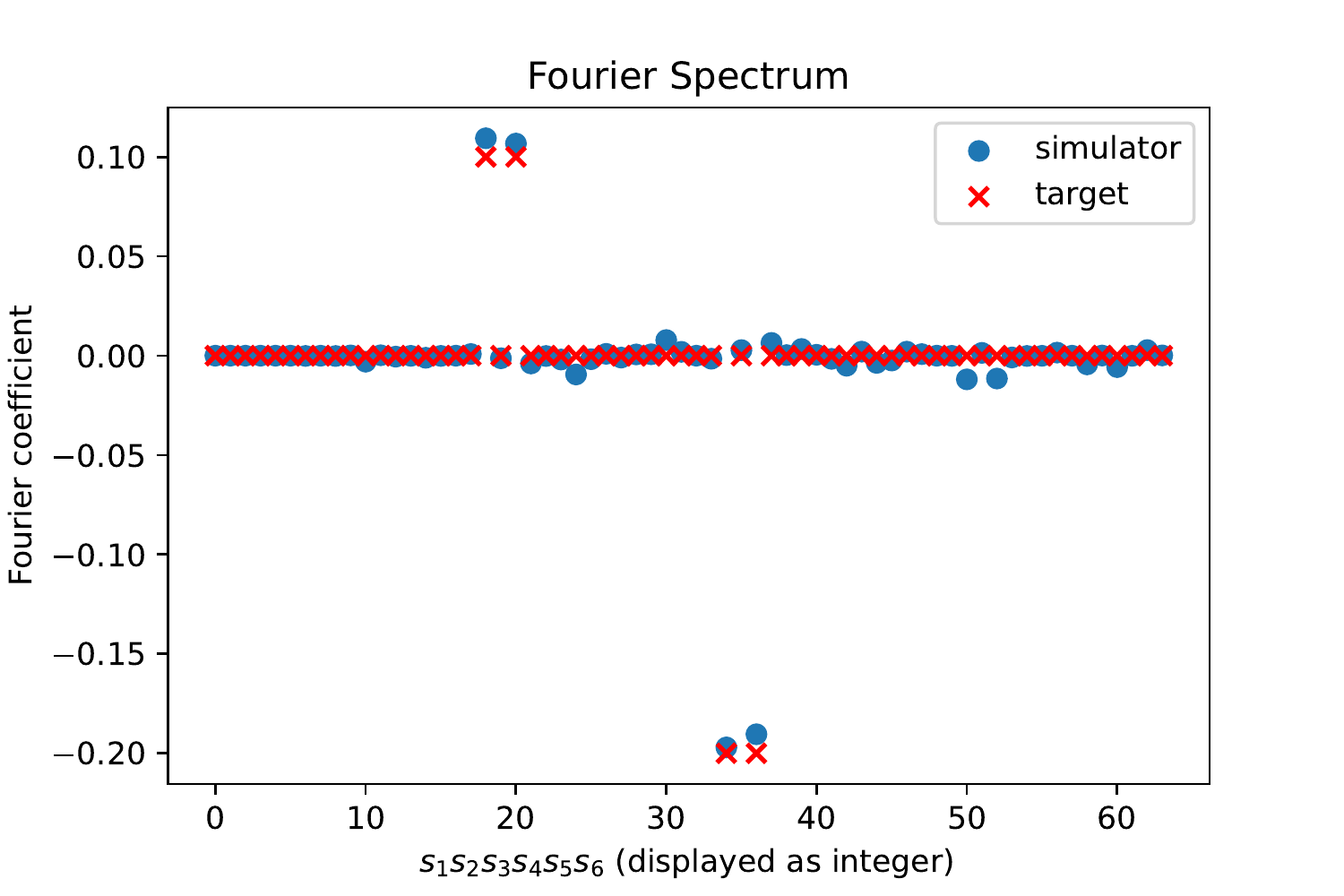}
 }\\
 
\subfloat[QRAC Embedding]{
	\includegraphics[scale=0.5]{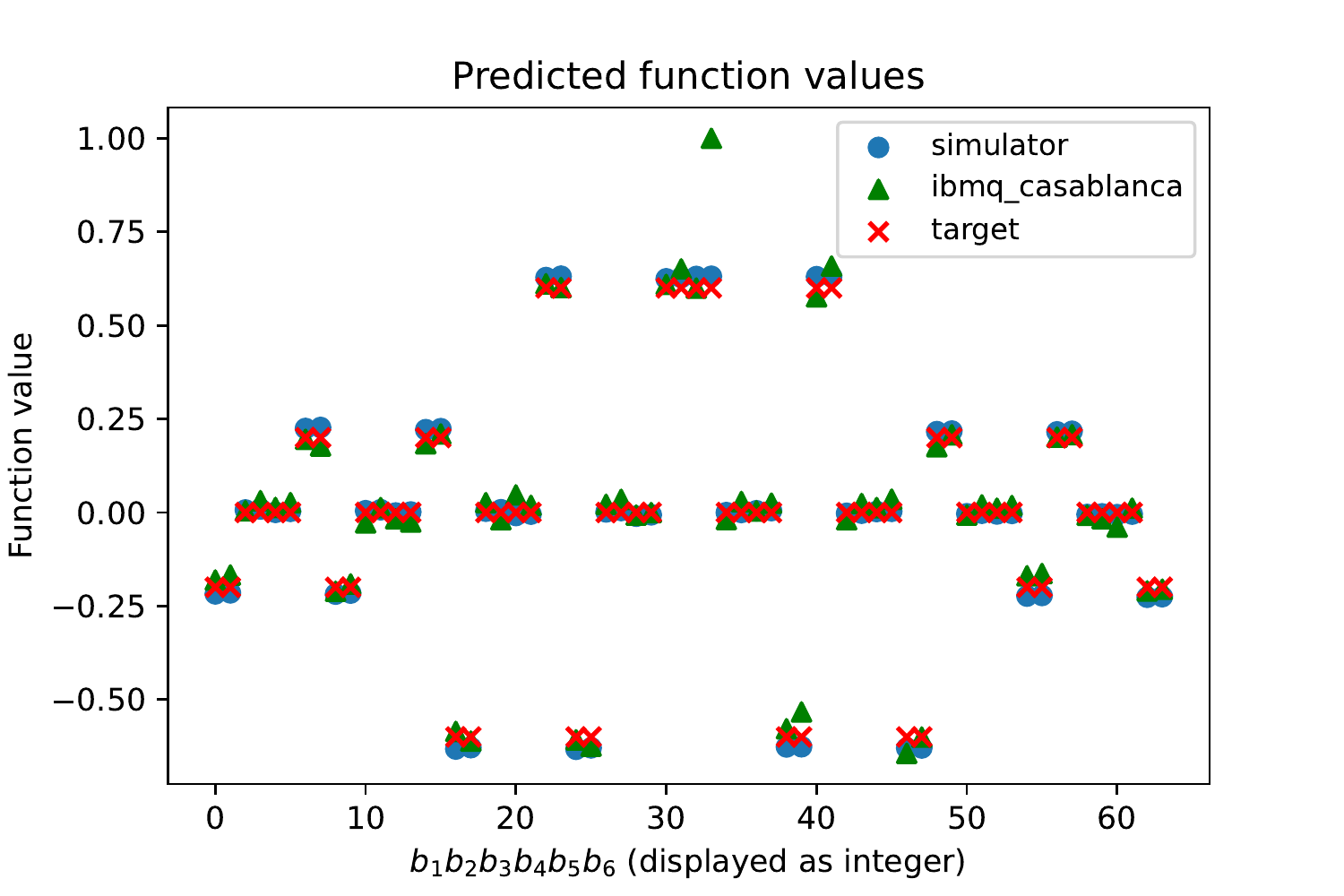}
	\includegraphics[scale=0.5]{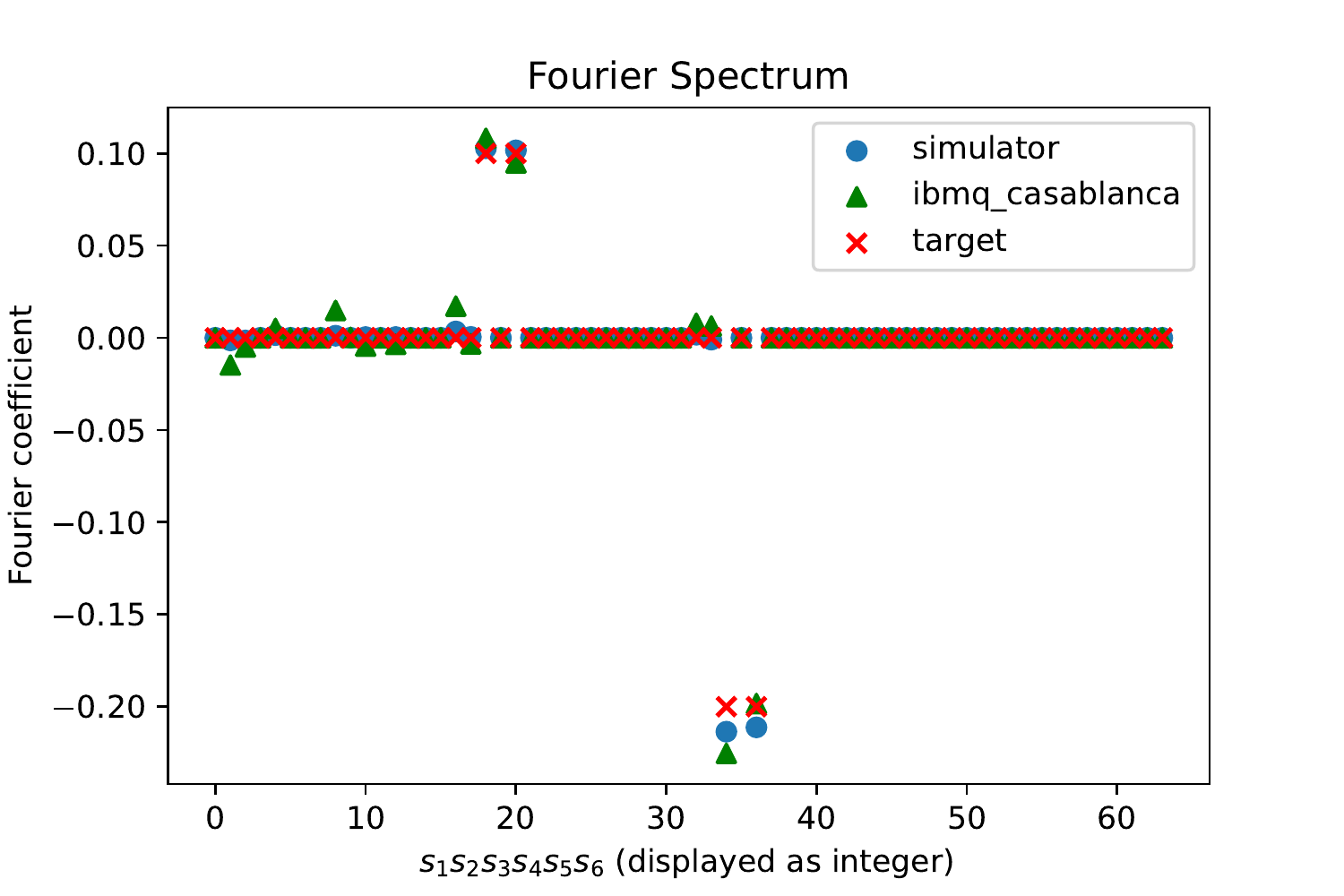}
 }
\caption{Simulator and experimental results obtained from using (a) the phase embedding with six qubits and (b) the QRAC embedding with two qubits to fit the function $g_6$ with $d_1= d_2=-\frac{1}{5}, d_3 =d_4 = \frac{1}{10}$. ``Target'' represents the exact outputs and Fourier spectrum of  $g_6$. Both methods successfully fit the target function with high accuracy.}
\label{fig:6_bit_results}
\end{figure*}

\section{Conclusion}
\label{sec:conclusion}
We summarize the results obtained here and give a few remarks on the implications of our findings. First, we have used Fourier analysis to provide sufficient conditions for a function on the Boolean cube to be expressible via variational linear quantum models or ensembles of variational linear quantum models utilizing the phase and QRAC embeddings. We showed that for any function on the Boolean cube there exists a variational linear quantum model based on the phase embedding that can represent it (Theorem \ref{thm:phase}) and an ensemble of such models that can represent any degree $d$ function with $d$ qubits (Theorem \ref{thm:phase_esemb}). These result narrows down sufficiency conditions for the representability of functions on the Boolean cube. Previously known results were proven for functions in $L_2([0, 2\pi]^n)$, where representability sufficiency was achieved outside of the linear model framework. This was done by showing that repeating the phase embedding $r$-times sequentially (data re-uploading) or in parallel approximates the $r$-th cubic partial sum of  a function's Fourier series (Equation \eqref{eqn:partial_fourier}).

We then showed, via Theorem \ref{thm:qrac} and Theorem \ref{thm:permuted_spectra}, that a single linear quantum model using the QRAC embedding can express low-degree functions on $\mathbb{B}^n$, if the function satisfies the property that the Fourier coefficent of $\chi_{\bm{s}}$ being nonzero implies that  $\bm{s} \in \mathcal{K}^{\text{QE}}_{\frac{n}{3}}$, or in $\tau(\mathcal{K}^{\text{QE}}_{\frac{n}{3}})$ if we permute the input by $\tau \in \mathcal{S}_n$. Lastly, we demonstrated that ensembles of  linear quantum models that use quantum random access codes can represent functions on the Boolean cube with degree $d \leq \lceil\frac{n}{3}\rceil$ (Theorem \ref{thm:qrac_ensemb}). The variational linear quantum models presented for learning functions on the Boolean cube can be easily applied to problems involving other discrete domains by converting integer representations to binary. Machine learning problems involving discrete-valued inputs appear frequently in industrial settings. For example, categorical features are known to be essential for machine learning tasks in financial \cite{DUAN20194716} and healthcare applications \cite{Hancock2020}.  

The results presented can be expanded in different directions. Future research can benchmark  model ensembles that use the phase or QRAC embeddings. %
It would be interesting to further study the impact of classical preprocessing on VQML models, which we showed to be beneficial for both embeddings. Potentially, similar expressivity theorems, like those in Section \ref{sec:embeddings}, can be demonstrated for linear quantum models that operate on discrete domains beyond the Boolean cube. For example, quantum computation is already known to provide significant computational speedups for problems involving finite Abelian groups \cite{2010childsalgebra}.

Furthermore, subsequent work could also compare the expressivities of classical neural networks to VQML models. One could obtain an upper bound on the required size of the neural network using the fact that an arbitrary real-valued function on the Boolean cube is a linear combination of parities.  It is folklore that a single hidden layer of size two suffices to express $\mathsf{XOR}$ on two input bits. Thus if one uses a divide-and-conquer approach, then a $d$-bit parity can be expressed with $2\log(d)$ hidden layers of width at most $d$. These form a binary tree with each layer repeated, and thus uses $\mathcal{O}(d)$ neurons. To express an arbitrary function,  the widest hidden layer would be at most the input width times the number of nonzero Fourier coefficients. 

This appears to be comparable to VQML case. For a function with an exponentially large set of nonzero Fourier coefficients, the  neural network may require exponentially many neurons. In this case, VQML may require a diagonal observable that decomposes into exponentially-many elements of $\text{span}_{\mathbb{R}}\{\mathbb{I}, \mathsf{Z}\}^{\otimes n}$ and require an exponentially deep PQC. However, the Fourier space representation of the function may not be the most computationally efficient form, and thus the classical neural network could use fewer resources.  Nevertheless, we note that uniquely in quantum, one can have a \textit{trivial} learner from Fourier sampling that may give quantum advantage~\cite{BshoutyJackson1998} providing access to uniform quantum examples (see also Appendix A of~\cite{Arunachalam2022}).
We leave a detailed comparison of these models as the topic of future work.

We performed proof-of-principle numerical experiments and executed the algorithms on IBM quantum processors. These experiments demonstrated that it is possible for a variational linear quantum model using  the embeddings presented to learn sufficient parameters to express low-degree functions. 
In future developments, one could study the ability for such models to generalize to unseen data, i.e. truly learn, and quantify the required number of training samples needed to learn functions, such as low-degree $k$-juntas. For simplicity, we setup the problem scenarios so that $\bm{D}=\mathsf{Z}^{\otimes m}$ was sufficient for all learning tasks. However, it would be interesting to experiment with more complicated problems where such a simple observable does not suffice. Potentially, there exist interesting classes of functions that can be expressed with a fixed $\bm{D}$ that is a linear combination of only polynomially (in the number of qubits) many Pauli terms. Lastly, there might be cases where we can exploit the structure of the problem to design efficient PQCs, particularly for near-term quantum hardware, for learning functions on the Boolean cube.
\vspace{10mm}

\appendices
\section{Repeated Phase Embeddings}\label{sec:repeated_embeddings}
\subsection{Phase Embedding}\label{sec:phase_improving_through_rep}

In this appendix, we show that incorporating serial repetitions into the phase embedding increases the expressivity of a model that uses it. Consider the embedding
\begin{equation}
     \bm{U}_{\operatorname{PE}}^{(m, r)}(\bm{b}) := \prod_{j=1}^{r}\mathsf{Z}^{(\nu_{\bm{w}_j}(\bm{b}))}\bm{V}_{\bm{w}_j}.
\end{equation}
where $\bm{V}_{\bm{w}_1} = \mathsf{H}^{\otimes m}$. Then, it follows that

\begin{align}
    &\bra{0_m}(\bm{U}_{\operatorname{PE}}^{(m, r)})^{\dagger}(\bm{b})\bm{O}\bm{U}_{\operatorname{PE}}^{ (m,r)}(\bm{b})\ket{0_m}\nonumber\\
    &=\bra{0_m}(\prod_{j=1}^{r}\mathsf{Z}^{(\nu_{\bm{w}_j}(\bm{b}))}\bm{V}_{\bm{w}_j})^{\dagger}\bm{O}\prod_{j=1}^{r}\mathsf{Z}^{(\nu_{\bm{w}_j}(\bm{b}))}\bm{V}_{\bm{w}_j}\ket{0_m} \nonumber\\
    &= \bra{+_n}\mathsf{Z}^{(\nu_{\bm{w}_1}(\bm{b}))}\bm{V}^{\dagger}_{\bm{w}_2}\mathsf{Z}^{(\nu_{\bm{w}_2}(\bm{b}))}\cdots \bm{V}^{\dagger}_{\bm{w}_r}\mathsf{Z}^{(\nu_{\bm{w}_r}(\bm{b}))}\bm{O}\mathsf{Z}^{(\nu_{\bm{w}_r}(\bm{b}))}\nonumber\\&\bm{V}_{\bm{w}_r}\cdots\mathsf{Z}^{(\nu_{\bm{w}_2}(\bm{b}))}\bm{V}_{\bm{w}_2}\mathsf{Z}^{(\nu_{\bm{w}_1}(\bm{b}))}\ket{+_n} \nonumber\\
    &= \frac{1}{2^m}\sum_{(\bm{y})_t, (\bm{k})_t \in (\mathbb{B}^m)^{\times r}}\Bigg[(-1)^{\sum_{t}\nu_{\bm{w}_t}(\bm{b}) \cdot (\bm{y})_t}(\bm{V}^{\dagger}_{(\bm{w})_{2:r}})_{(\bm{y})_{1:r}}\nonumber\\&\bm{O}_{\bm{y}_r,\bm{k}_r}(\bm{V}_{(\bm{w})_{2:r}})_{(\bm{k})_{1:r}}(-1)^{\sum_{t}\nu_{\bm{w}_t}(\bm{b}) \cdot (\bm{k})_t}\Bigg] \nonumber\\
    &=\sum_{(\bm{y})_t, (\bm{k})_t \in (\mathbb{B}^m)^{\times r}}\bm{W}_{(\bm{y})_t, (\bm{k})_t}(-1)^{\sum_{t}\nu_{\bm{w}_t}(\bm{b})\cdot ((\bm{y})_{t} \oplus (\bm{k})_{t})},
\end{align}

where 
\begin{equation}
(\bm{V}_{(\bm{w})_{2:r}})_{(\bm{k})_{1:r}} := (V_{\bm{w}_r})_{\bm{k}_r, \bm{k}_{r-1}}\cdots(V_{\bm{w}_2})_{\bm{k}_2, \bm{k}_{1}}
\end{equation}

and
\begin{equation}
\bm{W}_{(\bm{y})_t, (\bm{k})_t}:=\frac{1}{2^m}(\bm{V}^{\dagger}_{(\bm{w})_{2:r}})_{(\bm{y})_{1:r}}\bm{O}_{\bm{y}_r,\bm{k}_r}(\bm{V}_{(\bm{w})_{2:r}})_{(\bm{k})_{1:r}}.
\end{equation}

Suppose that $m = \frac{n}{r}$ and that each $\nu_{\bm{w}_j}$ partitions the $n$ inputs bits into $r$-tuples of size $m$. Then it follows that the above reduces to
\begin{equation}
    \sum_{\bm{s} \in \mathbb{B}^n}\widetilde{\bm{W}}_{\bm{s}}(-1)^{\bm{b} \cdot \bm{s}} = \sum_{\bm{s} \in \mathbb{B}^n}\widetilde{\bm{W}}_{\bm{s}}\chi_{\bm{s}}(\bm{b}),
\end{equation}
where 
\begin{align}
    \widetilde{\bm{W}}_{\bm{s}} := \sum_{\substack{(\bm{y})_t, (\bm{k})_t \in (\mathbb{B}^m)^{\times r} \\ (\bm{y})_{t}\oplus(\bm{k})_{t} = \bm{s}}}\bm{W}_{(\bm{y})_t, (\bm{k})_t}
\end{align} 
Note that the set
\begin{align}
    \{&(\bm{y})_{t}\oplus(\bm{k})_{t} = (y_{11} \oplus k_{11}, y_{12} \oplus k_{12}, \dots, y_{1m} \oplus k_{1m}, \nonumber\\
&\quad \dots, y_{j1} \oplus k_{j1}, \dots, y_{rm} \oplus k_{rm})~|\nonumber\\&\quad
    (\bm{y})_t, (\bm{k})_t \in (\mathbb{B}^m)^{\times r}\}
\end{align}
contains all elements of $\mathbb{B}^n$. Thus, it is possible for the Fourier spectrum of this model to have support on any of the Fourier basis elements, which implies an increase in expressivity.

\subsection{QRAC Embedding}
\label{sec:qrac_improving_through_rep}

In this appendix, we present an example that shows that using multiple consecutive QRAC embeddings does enrich the class of functions that a single linear quantum model using this embedding can represent. Let $\bm{R}_{\bm{n}}(\theta) = e^{-i\frac{\theta}{2}(n_1\mathsf{X} +n_2\mathsf{Y} + n_3\mathsf{Z})}$, where $\bm{n} \in \mathbb{R}^3$ and  $\lVert \bm{n}\rVert_2 = 1$. We will consider replacing $\bm{U}_{3,1}$ with 
\begin{equation}
\tilde{\bm{U}}_{3,1} = \bm{U}_{3,1}\bm{R}_{\bm{n}}(\pi)\bm{U}_{3,1},
\end{equation}
where $\bm{n} = \frac{1}{\sqrt{3}}(1, 1, 1)$.
Then,
\begin{align}
    f_{\bm{\theta}}(\bm{b}) &= \Tr[O_{\bm{\theta}}\tilde{\bm{U}}_{3,1}(\bm{b})\ketbra{0}\tilde{\bm{U}}_{3,1}^{\dagger}(\bm{b})]  \nonumber\\
    &= a_1\Tr[\bm{O}_{\bm{\theta}}] + (a_2 + a_3(-1)^{b_1 + b_2 + b_3} + a_4(-1)^{b_1 + b_2} \nonumber\\&+ a_5(-1)^{b_1 + b_3} + a_6(-1)^{b_1} + a_7(-1)^{b_2 + b_3})\Tr[\bm{O}_{\bm{\theta}}\mathsf{X}] \nonumber\\
    &+ (a_8 + a_9(-1)^{b_3} + a_{10}(-1)^{b_1 + b_2} + a_{11}(-1)^{b_1 + b_2 + b_3} \nonumber\\&+ a_{12}(-1)^{b_2 + b_3} + a_{13}(-1)^{b_2} + a_{14}(-1)^{b_1 + b_3})\Tr[\bm{O}_{\bm{\theta}}\mathsf{Y}] \nonumber \\ 
    &+ (a_{15} + a_{16}(-1)^{b_1 + b_3} + a_{17}(-1)^{b_2 + b_3} \nonumber \\& + a_{18}(-1)^{b_1} + a_{19}(-1)^{b_2})\Tr[\bm{O}_{\bm{\theta}}\mathsf{Z}],
\end{align}
where the $a_i$ are fixed, but we have abstracted them out because our focus is on the number of Fourier basis terms, $(-1)^{\bm{s}\cdot\bm{b}}$. This model is a linear combination of all Fourier basis elements for functions on $\mathbb{B}^3$. 
The trainable component of the model determines the values for $\Tr[\bm{O}_{\bm{\theta}}\bm{P}]$, where $\bm{P}$ is a Pauli operator. For a given Pauli operator this value is shared by more than one Fourier basis element. 

For obtaining the expansion above, it is helpful to express the $\bm{R}_{\mathsf{Z}}$ and $\bm{R}_{\mathsf{Y}}$ rotations involved in $\bm{U}_{3,1}$ in terms of $b_1, b_2, b_3$ as follows:
\begin{align}
     &\bm{R}_{\mathsf{Z}}(\phi_{\mathsf{Z}}(b_1, b_2)) = \gamma_{+}(b_1)(c^{(\mathsf{Z})}_{+}\mathbb{I} - c^{(\mathsf{Z})}_{-}i(-1)^{b_2}\mathsf{Z}) \nonumber\\&\quad+ \gamma_{-}(b_1)(c^{(\mathsf{Z})}_{-}\mathbb{I} - c^{(\mathsf{Z})}_{+}i(-1)^{b_2}\mathsf{Z}) \\ 
    &\bm{R}_{\mathsf{Y}}(\phi_{\mathsf{Y}}(b_3)) = \gamma_{+}(b_3)(c^{(\mathsf{Y})}_{+}\mathbb{I} - c^{(\mathsf{Y})}_{-}i\mathsf{Y}) \nonumber\\&\quad+ \gamma_{-}(b_3)(c^{(\mathsf{Y})}_{-}\mathbb{I} - c^{(\mathsf{Y})}_{+}i\mathsf{Y}),
\end{align}
where
\begin{align}
    c^{(\mathsf{Z})}_{\pm} = \frac{\sqrt{2 \pm \sqrt{2}}}{2} \quad \textnormal{and} \quad \quad c^{(\mathsf{Y})}_{\pm} = \sqrt{\frac{1}{2} \pm \frac{1}{2\sqrt{3}}}
\end{align}
as well as 
\begin{align}
    \gamma_{\pm}(b) = \frac{1 \pm (-1)^{b}}{2}.
\end{align}
The functions $\phi_{\mathsf{Z}}$ and $\phi_{\mathsf{Y}}$ are defined in Section \ref{sec:qrac} with
\begin{align}
    \alpha_1 = \frac{\pi}{4} \quad \textnormal{and} \quad \alpha_2 = 2\cos^{-1}\bigg(\sqrt{\frac{1}{2} + \frac{1}{2\sqrt{3}}}\bigg).
\end{align} 
This formulation introduces dependence on terms of the form $(-1)^{\bm{s}\cdot\bm{b}}$ when expanding the expectation.

\section{Using variational SWAP networks in the phase embedding}
\label{sec:phase_swaps}
When using the phase embedding, after loading the input bits onto a register with  $\mathsf{X}$ gates, we can apply a layer of variational $\mathsf{SWAP}$ gates, i.e. $e^{-i\frac{\beta}{2}\mathsf{SWAP}}$, with learnable parameters $\beta$. The layer consists of one variational $\mathsf{SWAP}$ between every pair of qubits, i.e. $\binom{n}{2} = \frac{n(n-1)}{2}$ gates.  This allows for testing multiple combinations of the $k$ out of $n$ input bits in superposition. Specifically setting all $\beta$'s to $\frac{\pi}{2}$ produces a uniform superposition containing all possible subsets of $k$ bits that can be swapped into the first $k$ bits. One motivation behind adding the $\mathsf{SWAP}$ network is due to the following lemma.

\begin{lemma}
Consider the input to the phase embedding, where the bits are loaded onto a computational basis state. Suppose the parameterized observable $\bm{O}_{\bm{\theta}}$ and the layer of Hadamards acts only on the first $k$ of the $n$ qubits and thus the output of the model only depends on $b_1, \dots, b_k$. For any fixed layout of all-to-all variational $\mathsf{SWAP}$ network inserted after loading the input bits, and any subset $\mathcal{B}$ of $k$ of the input bits, there exists a setting of the variational parameters such that the model depends only on $\mathcal{B}$.
\end{lemma}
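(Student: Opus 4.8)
The plan is to show that, for a suitable setting of the SWAP angles $\beta$, the all-to-all variational SWAP layer enacts a permutation of the qubit registers that moves the bits indexed by $\mathcal{B}$ into the first $k$ positions; the statement then follows because the Hadamard layer and $\bm{O}_{\bm{\theta}}$ touch only those first $k$ qubits. The one trick that makes this work for \emph{any} fixed layout is to route using only \emph{disjoint} transpositions, so that the order in which the SWAP gates appear is irrelevant.

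First I would record the single-gate fact: since $\mathsf{SWAP}^2=\mathbb{I}$, one has $e^{-i\frac{\beta}{2}\mathsf{SWAP}}=\cos(\beta/2)\,\mathbb{I}-i\sin(\beta/2)\,\mathsf{SWAP}$, so $\beta=0$ yields the identity and $\beta=\pi$ yields $-i\,\mathsf{SWAP}$, i.e. the qubit swap up to a global phase. Because the model output is $\Tr[\bm{O}_{\bm{\theta}}\rho(\bm{b})]$ with the feature state $\rho(\bm{b})$ built from $\mathsf{X}^{(\bm{b})}\ket{0_n}$, any overall phase produced by the embedding cancels. Hence, restricting every $\beta$ to $\{0,\pi\}$, the SWAP layer implements exactly the product, taken in layout order, of the transpositions whose gate is set to ``on.''

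Next comes the combinatorial core. Write $A:=\mathcal{B}\setminus\{1,\dots,k\}$ and $A':=\{1,\dots,k\}\setminus\mathcal{B}$; these sets have the same cardinality $r$, and I would fix any pairing $A=\{a_1,\dots,a_r\}$, $A'=\{a_1',\dots,a_r'\}$. The permutation $\pi:=(a_1\,a_1')(a_2\,a_2')\cdots(a_r\,a_r')$ is a product of pairwise disjoint transpositions; the corresponding SWAP gates act on disjoint qubits and therefore commute, so turning ``on'' exactly those $r$ gates (each present because the layout is all-to-all) and all others ``off'' realizes $\pi$ regardless of where those gates sit in the fixed layout. A short check gives $\pi(\mathcal{B})=\{1,\dots,k\}$: each $b_{a_l}$ with $a_l\in\mathcal{B}$ is moved to a position $a_l'\le k$, the positions in $\mathcal{B}\cap\{1,\dots,k\}$ are left untouched, and no other position in $\{1,\dots,k\}$ is affected, so the first $k$ registers end up carrying a computational-basis state determined entirely by $(b_i)_{i\in\mathcal{B}}$.

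To finish, I would observe that $\mathsf{X}^{(\bm{b})}\ket{0_n}$ is a computational-basis state, so after the SWAP layer the register is still a product of single-qubit basis states, up to a global phase, with its first $k$ factors a function of $(b_i)_{i\in\mathcal{B}}$ only. Since $\mathsf{H}^{\otimes k}$ and $\bm{O}_{\bm{\theta}}$ act solely on those $k$ qubits, $\Tr[\bm{O}_{\bm{\theta}}\rho(\bm{b})]$ reduces to an expectation evaluated on the first $k$ qubits and hence depends only on $\mathcal{B}$, as claimed. The only point that needs care is the reduction to disjoint transpositions: one should not try to route with a general permutation, whose transposition factors might have to be applied in an order the fixed layout does not supply, but instead exploit that a product of disjoint transpositions already suffices for set-routing and is insensitive to ordering. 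I expect this to be the sole conceptual obstacle; the rest is bookkeeping about global phases and about which qubits the trainable block acts on.
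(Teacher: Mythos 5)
Your proposal is correct and follows essentially the same route as the paper's proof: route the relevant bits into the first $k$ positions using a product of disjoint transpositions, implemented by switching the corresponding variational $\mathsf{SWAP}$ gates on ($\beta=\pi$) or off ($\beta=0$), with the irrelevance of gate ordering and of global phases handled exactly as you describe. Your write-up simply makes explicit the bookkeeping (pairing $\mathcal{B}\setminus[k]$ with $[k]\setminus\mathcal{B}$, commutativity of disjoint swaps) that the paper's terser argument leaves implicit.
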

\begin{proof}
We can find a bijective mapping between the $k$ relevant bits and the first $k$ qubits, potentially acting as the identity on some qubits. This map can be expressed as a product of transpositions of the $n$ input elements that do not act on the same qubit. Thus, any all-to-all variational $\mathsf{SWAP}$ network can implement this map by enabling/disabling the relevant $\mathsf{SWAP}$s.
\end{proof}

As a proxy for variational $\mathsf{SWAP}$s, one could use the particle-preserving $\mathsf{XY}$ gate \cite{Hadfield_2019}.
The benefit of a variational $\mathsf{SWAP}$ network in practice would require further experimentation.  For QRAC, it appears that we would need to encode the $n$ input bits into an additional quantum register destroying the constant-factor reduction in qubits. Also, the bits-to-angle mapping for QRAC would need to be implemented coherently and the rotation gates controlled on additional ancillas.

\section{Generalization bounds}

While our focus is on expressivity, we can almost trivially apply one of the generalization bounds obtained by \cite{Caro2021encodingdependent} to obtain one for the phase embedding and QRAC embedding. The following is the definition of a variational linear quantum model, $f_{\bm{\theta}}$ that was presented in Section \ref{sec:quantumcircuitlearning}:

\begin{equation}
    \label{eqn:variational_linear_quantum_model_apdx}
    f_{\bm{\theta}}(\bm{b}) := \Tr[\bm{O}_{\bm{\theta}}\rho(\bm{b})],
\end{equation}
where 
\begin{equation}
\label{eqn:param_obs_apdx}
    \bm{O}_{\bm{\theta}} := \bm{W}^{\dagger}(\bm{\theta})\bm{D}\bm{W}(\bm{\theta}).
\end{equation}
The operator $\bm{D}$ is an observable that is diagonal in the computational basis, and $\bm{W}(\bm{\theta})$ is a parameterized-unitary operator. The unitary used to prepare the feature state $\rho(\bm{b})$ can be the phase or QRAC embedding.

\begin{theorem}
\label{cor:phase_embed_gen}
Let $n, m \in \mathbb{N}$, and $\ell:\mathbb{R} \times \mathbb{R} \xrightarrow[]{} [0, c]$ be a loss function that is $\beta$-Lipschitz in the second coordinate. In addition, consider an arbitrary $\delta \in (0, 1)$ and arbitrary probability measure $\mu$ on $\mathbb{B}^{n} \times \mathbb{R}$. Furthermore, suppose $f_{\bm{\theta}}$ is variational linear quantum model using either the phase or QRAC embedding, then, with probability $\geq 1 - \delta$ over the choice of an i.i.d training set $\mathcal{T}$, the model $f_{\bm{\theta}}$ satisfies:

\begin{equation}
    R(f_{\bm{\theta}}) \leq \hat{R}_{\mathcal{T}}(f_{\bm{\theta}}) + \widetilde{\mathcal{O}}\left(\frac{\beta\lVert \bm{D}\rVert_{2} + c\sqrt{\log(1/\delta)}}{\sqrt{|\mathcal{T}|}}\right),
\end{equation}
where $R$ and $\hat{R}_{\mathcal{T}}$ are the generalization error and training error respectively. Additionally, $\tilde{O}$ suppresses poly-logarithmic factors.
\end{theorem}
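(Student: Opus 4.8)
The plan is to obtain this statement as a direct corollary of the encoding-dependent generalization bounds of \cite{Caro2021encodingdependent}. That work establishes, for parametrized quantum models of the form $f_{\bm{\theta}}(\bm{x}) = \Tr[\bm{O}_{\bm{\theta}}\rho(\bm{x})]$, a uniform-convergence (Rademacher-complexity / covering-number) bound in which the gap between the true risk $R$ and the empirical risk $\hat{R}_{\mathcal{T}}$ is controlled by (i) the Lipschitz constant $\beta$ and the range bound $c$ of the loss, (ii) a Schatten norm of the measured observable, and (iii) an encoding-dependent quantity governed by the set of frequencies accessible to the model, all divided by $\sqrt{|\mathcal{T}|}$, with the confidence parameter entering through an additive $c\sqrt{\log(1/\delta)}$ term. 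First I would recall the precise statement of the relevant theorem from \cite{Caro2021encodingdependent} and identify these three ingredients in our setting.

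Next I would verify the hypotheses. The loss bound and the $\beta$-Lipschitz property are assumed. For the range of the model, since $\rho(\bm{b})$ is a density operator and $\bm{O}_{\bm{\theta}} = \bm{W}^{\dagger}(\bm{\theta})\bm{D}\bm{W}(\bm{\theta})$ has the same spectrum as $\bm{D}$, we have $|f_{\bm{\theta}}(\bm{b})| \leq \lVert\bm{O}_{\bm{\theta}}\rVert_{\infty} = \lVert\bm{D}\rVert_{\infty}$ for every $\bm{b} \in \mathbb{B}^n$, so the hypothesis class has bounded range, and the Schatten norm of $\bm{O}_{\bm{\theta}}$ entering the bound is likewise determined entirely by $\bm{D}$, contributing the $\lVert\bm{D}\rVert_{2}$ factor. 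For the encoding-dependent quantity I would use that neither the phase embedding nor the QRAC embedding employs any data re-uploading: by Theorem \ref{thm:phase} the Fourier spectrum of $f_{\bm{\theta}}$ is supported inside $\{\chi_{\bm{s}} : \bm{s} \in \mathbb{B}^n\}$ for the phase embedding, and by Theorem \ref{thm:qrac} inside a subset of this set for the QRAC embedding, so the number of accessible frequencies is at most $2^n$ in both cases. Feeding these bounds into the theorem of \cite{Caro2021encodingdependent} and absorbing the remaining dimension- and cardinality-dependent prefactors into $\widetilde{\mathcal{O}}(\cdot)$ yields the claimed inequality.

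The point requiring care is that the bounds of \cite{Caro2021encodingdependent} are phrased for Hamiltonian-style data encodings $e^{-i x_j \bm{H}_j}$ with one continuous coordinate per encoding gate, whereas our inputs are Boolean and, for the QRAC embedding, each rotation angle in Eq.~\eqref{eqn:qrac_rotations} is an affine---indeed bilinear---function of several input bits rather than a single coordinate. I would handle this by noting that restricting the input domain to $\mathbb{B}^n$ only shrinks the hypothesis class, so any complexity bound proved for the continuous parent class still applies, and that the generalization argument ultimately depends on the embedding only through the range bound and the accessible frequency set, both of which are controlled directly above without reference to the specific gate structure. I expect this reconciliation of the encoding models, together with pinning down exactly which Schatten norm of $\bm{D}$ and which poly-logarithmic factors appear, to be the main---though essentially bookkeeping---obstacle; the remainder is a direct substitution into the cited bound.
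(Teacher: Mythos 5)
Your proposal is correct and matches the paper's proof in essentials: both reduce the claim directly to the encoding-dependent generalization bounds of Caro \etal~\cite{Caro2021encodingdependent}, verify the bounded, $\beta$-Lipschitz loss hypotheses, and use unitary invariance of the Schatten 2-norm so that $\lVert\bm{O}_{\bm{\theta}}\rVert_{2} = \lVert\bm{D}\rVert_{2}$. The only cosmetic difference is that the paper invokes Corollary 14(a) for Pauli encodings directly---viewing the phase embedding as $n$ Pauli encoding gates on a restricted (Boolean) domain and the QRAC embedding as Pauli encodings on the $\tfrac{2n}{3}$-dimensional preprocessed domain given by $\phi_{\mathsf{Z}}$ and $\phi_{\mathsf{Y}}$---whereas you argue through the accessible frequency set together with the domain-restriction observation, which amounts to the same bookkeeping.
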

\begin{proof}
By Equation \eqref{eqn:continuous_phase_embed} the phase embedding can be viewed as a Pauli encoding with restricted domain. Thus Corollary 14 result (a) from \cite{Caro2021encodingdependent} applies, and in the phase embedding case $n$-encoding gates are used for an $n$-dimensional input. Since quantum models based on the QRAC embedding can be viewed as PQCs with Pauli encodings on  a $\frac{2n}{3}$-dimensional domain defined by $\phi_{\mathsf{Z}}$ and $\phi_{\mathsf{Y}}$ using $\frac{2n}{3}$-encoding gates, the same bound applies to QRAC-based models.  Since $\bm{W}$ is unitary,  $\lVert \bm{O}_{\bm{\theta}} \rVert_{2} = \lVert \bm{D} \rVert_{2}$, and so the bound applies to $\bm{O}_{\bm{\theta}}$ too.
\end{proof}
\section{Experimental Device Parameters}
Here we report the experimental device parameters for each of the hardware experiments presented in Section \ref{sec:experiments}. The experiments to fit functions on $\mathbb{B}^3$ were carried out on the 
\textit{ibmq\_guadalupe} device, where qubits 5, 8 and 9 were used for the phase embedding experiment and qubit 8 was used for the QRAC embedding experiment. The experiment to fit a function on $\mathbb{B}^6$ was carried out on the \textit{ibmq\_casablanca} device using qubits 1 and 2.
\begin{center}
\begin{tabular}{ c|c|c|c } 
 \hline
 Parameter & 3-bit phase & 3-bit QRAC & 6-bit QRAC \\ \hline 
 T1 ($\mu$s) & 95 & 130 & 104 \\
 T2 ($\mu$s) & 87 & 98 & 99 \\
 single-qubit error & 0.000283 & 0.000275 & 0.000501 \\ 
 two-qubit error & 0.00673 & 0.00727 & 0.0101 \\
 readout error & 0.017 & 0.0207 & 0.0193\\
 \hline
\end{tabular}
\end{center}

\section*{Acknowledgements}
D.H. would like to thank Yue Sun, Arthur Rattew, Shouvanik Chakrabarti, and Ruslan Shaydulin for insightful discussions. In addition, we would like to thank Pierre Minssen and Shaohan Hu for their feedback on this manuscript.
\section*{Disclaimer}
This paper was prepared with synthetic data and for informational purposes with contributions from the Global Technology Applied Research center of JPMorgan Chase \& Co. This paper is not a product of the Research Department of JPMorgan Chase \& Co. or its affiliates. Neither JPMorgan Chase \& Co. nor any of its affiliates makes any explicit or implied representation or warranty and none of them accept any liability in connection with this paper, including, without limitation, with respect to the completeness, accuracy, or reliability of the information contained herein and the potential legal, compliance, tax, or accounting effects thereof. This document is not intended as investment research or investment advice, or as a recommendation, offer, or solicitation for the purchase or sale of any security, financial instrument, financial product or service, or to be used in any way for evaluating the merits of participating in any transaction.

\bibliographystyle{unsrt}
\bibliography{bibliography}

\EOD

\end{document}